\definecolor{linkcolor}{RGB}{15,15,175}
\def\B{\mathbb{B}}
\def\R{\mathbb{R}}
\def\Rc{\overline{\R}}
\def\N{\mathbb{N}}
\def\uni{\mathcal{U}} 
\def\diag{\mathrm{diag}}
\def\tr#1{\mathrm{tr}\left( #1 \right)}
\def\abs#1{\left\vert #1 \right\vert}
\def\det{\mathrm{det}}
\def\set#1{\lbrace #1 \rbrace}
\def\norm#1{\Vert #1 \Vert}
\def\card#1{\vert #1 \vert}
\def\t{^\intercal}
\def\v#1{\bm{#1}}
\def\m#1{\bm{\mathrm{#1}}}
\def\g#1{\m{#1}}
\def\eqdef{:=}
\def\ind{\mathds{1}}
\def\logistic{\ell}
\def\bigtimes{\mathop{\mbox{\Large$\times$}}}
\newcommand{\prob}[1]{\mathbb{P}\left(#1 \right)}
\newcommand{\probplus}[2]{\mathbb{P}_{#1}\left(#2 \right)}
\newcommand{\evplus}[2]{\mathbb{E}_{#1}\left(#2 \right)}
\newcommand{\raisemath}[1]{\mathpalette{\raisem@th{#1}}}
\newcommand{\raisem@th}[3]{\raisebox{#1}[0pt][0pt]{$#2#3$}}
\def\ExpQu{{\raisemath{-1.1pt}{e}}}
\def\Prod{{\raisemath{-1.1pt}{\sqcap}}}
\def\Link{{\raisemath{-1.1pt}{\,\mu}}}
\def\LogCo{{\raisemath{-1.1pt}{\,\logistic}}}
\def\Gau{{\raisemath{-1.1pt}{gc}}}
\newtheorem{theorem}{Theorem}[section]
\newtheorem{lemma}[theorem]{Lemma}
\newtheorem{proposition}[theorem]{Proposition}
\newtheorem{corollary}[theorem]{Corollary}
\newenvironment{definition}[1][Definition]{\begin{trivlist}
\item[\hskip \labelsep {\bfseries #1}]}{\end{trivlist}}
\def\keywords#1{\par\addvspace\medskipamount{\rightskip=0pt plus1cm
\def\and{\ifhmode\unskip\nobreak\fi\ $\cdot$
}\noindent \textbf{Keywords}\enspace\ignorespaces#1\par}}
\newcommand{\q}[1]{\centering{\colorbox{Yellow}{\parbox{0.9\textwidth}{#1}}}\flushleft{}}
\title{On parametric families for sampling binary data with specified mean and correlation}
\author{Christian Sch\"afer$^{1,2}$}
\def\crest{
\footnotetext[1]{Centre de Recherche en Économie et Statistique, 3 Avenue Pierre Larousse, 92240 Malakoff, France}
\footnotetext[2]{CEntre de REcherches en MAthématiques de la DEcision, Université Paris-Dauphine, Place du Maréchal de Lattre de Tassigny
75775 Paris, France}
}
\def\link{\mu}
\def\Gamma{\varGamma}
\def\q{q}
\def\xi{x}
\begin{document}

\maketitle
\crest
\thispagestyle{empty}

%


%

\begin{abstract}
We discuss a class of binary parametric families with conditional probabilities taking the form of generalized linear models and show that this approach allows to model high-dimensional random binary vectors with arbitrary mean and correlation. We derive the special case of logistic conditionals as an approximation to the Ising-type exponential distribution and provide empirical evidence that this parametric family indeed outperforms competing approaches in terms of feasible correlations.
\end{abstract}
\keywords{Binary parametric families \and Sampling correlated binary data}


\section{Introduction}
The need to sample random vectors of correlated binary variables arises in various statistical application; examples are estimation of the posterior mean in Bayesian variable selection \citep{george_mcculloch_97}, small-sample properties of estimators in longitudinal studies \citep[for a recent review]{farrell2008methods}, stochastic binary optimization in combinatorics \citep{Rub:CE2}, simulation of ferromagnetic materials \citep{swendsen1987nonuniversal}, performance of neural networks \citep{lebbah2008probabilistic} and market segmentation analysis \citep{dolnicar2001behavioral} among others.

Let $\B\eqdef\set{0,1}$ denote the binary space. In some cases, such as small-sample analysis in longitudinal studies, we need a parametric family $\q$ explicitly for sampling data on $\B^{d}$ with specified mean and correlations. In other cases, the parametric family serves as a proxy for a more complex distribution we cannot directly sample from. Suppose we have two functions $\tilde\pi\colon\B^d\to\R_+$ and $f\colon\B^d\to\R$ and we want to compute the expected value $\evplus{\pi}{f(\v \Gamma)}=h^{-1}\,\sum_{\v\gamma\in\B^{d}}f(\v\gamma)\tilde\pi(\v\gamma)$ with $h\eqdef\sum_{\v\gamma\in\B^{d}}\tilde\pi(\v\gamma)$.

If $d$ is too large for enumeration of the state space we have to rely on Monte Carlo algorithms, the vast majority of which involve sampling Markov transitions with invariant measure $\pi$, the standard approach being the Metropolis-Hastings kernel \citep[ch. 7]{RobCas}. For a transition from $\v X\sim\pi\eqdef\tilde\pi/h$, we sample $\v\Gamma\sim \q(\cdot\mid\v X)$ from an auxiliary kernel $\q$ and accept the step to $\v \Gamma$ with probability
\begin{equation}
\label{eq:acc prob}
\lambda_{\q}(\v\Gamma,\v X)\eqdef \min\set{1,[\tilde\pi(\v\Gamma)\q(\v X\mid \v\Gamma)]/[\tilde\pi(\v X)\q(\v\Gamma\mid \v X)]} 
\end{equation}
or return $\v X$ otherwise. Random walks on $\B^{d}$ are easy to implement but often suffer from slow mixing; independent proposals $\v\Gamma\sim \q$ provide fast-mixing if $\lambda_{\q}(\v\Gamma,\v X)$ is reasonably high on average, in other words if $\q$ is sufficiently close to $\pi$ \citep{schaefer2011sequential,schaefer2012particle}. This rationale complements other approaches to fast mixing such as parallel chains \citep[among others]{bottolo2010ess} or self-avoiding dynamics \citep{hamze2011selfavoiding}.\\

The vast field of potential applications in Monte Carlo algorithms encourages the study of families with $d(d+1)/2$ parameters which, like the multivariate normal, accommodate all valid combinations of means and correlations. This paper elaborates some theoretical background on random binary vectors, proves the range of possible correlations for a particular class of parametric families, connects to existing work in the literature and provides broad numerical insight concerning the range of dependencies achievable in practice. It is structured as follows.

In Section \ref{sec:properties}, we introduce suitable notation and review results relating binary distributions to its moments. Section \ref{sec:families from glms} elaborates on parametric families which have, by definition, conditional distributions that are generalized linear regressions. We show that they accommodate the whole range of possible correlations. Section \ref{sec:log cond family} motivates the use of the logistic link function as an approximation to the Ising-type exponential quadratic family. In Section \ref{sec:sampling}, we discuss how to adjust the parametric families to specified marginals. Finally, in Section \ref{sec:numerical experiments} we perform numerical experiments to compare competing approaches for sampling correlated binary data in high dimensions.

%


%

\section{Preliminaries on random binary vectors}
\label{sec:properties}
We write $\B\eqdef\set{0,1}$ for the binary space and denote by $d\in\N$ the generic dimension. Given a vector $\v\gamma\in\B^d$ and an index set $I\subseteq D\eqdef \set{1,\dots,d}$, we write $\v\gamma_I\in\B^{\card I}$ for the sub-vector indexed by $I$ and $\v\gamma_{-I}\in\B^{d-\card I}$ for its complement. For $I=\set{i,\dots,j}$ we use the more explicit notation $\v\gamma_{i:j}$. Unless otherwise defined, $\pi$ denotes an arbitrary probability mass function on $\B^{d}$. We denote by $\evplus{\pi}{f(\v \Gamma)}$ the expected value with respect to $\v \Gamma\sim\pi$ and write $\probplus{\pi}{A}\eqdef\evplus{\pi}{\ind_{A}(\v \Gamma)}$ for an event $A\subseteq \B^{d}$.
\begin{definition}
Let $\v m\in(0,1)^{d}$ be a mean vector. We call
$
\q_{\v m}^{\Prod}(\v \gamma)\eqdef\prod_{i\in D}m_i^{\gamma_i}(1-m_i)^{1-\gamma_i}
$
the product family or the mass function of $d$ independent Bernoulli variables.
\end{definition}

\subsection{Absolute cross-moments}
\begin{definition}
For a set $I\subseteq D$, we refer to
$
m_I^{\pi}\eqdef\evplus{\pi}{\textstyle\prod_{i\in I}\Gamma_i}=\textstyle\sum_{\v\gamma\in\B^{d}}\pi(\v\gamma)\prod_{i\in I}\gamma_i
$
as the cross-moment indexed by $I$.
\end{definition}
Note that $m^{\pi}_I=\probplus{\pi}{\v \Gamma_{I}=\v 1}$ which means that cross-moments and marginal probabilities indexed by $I\subseteq D$ are identical. Higher order cross-moments coincide with first order cross-moments. The range of possible cross-moments is limited by the following constraints.
\begin{proposition}
\label{prop:bin bounds}
The cross-moments  of binary data fulfill the sharp inequalities
\begin{equation}
\label{eq:bin bounds}
\max\left\lbrace\textstyle\sum_{i\in I}m_i-\vert I\vert+1, 0\right\rbrace
\leq m_I
\leq \min\set{m_K\colon K\subseteq I}.
\end{equation}
\end{proposition}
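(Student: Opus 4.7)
The plan is to treat the two inequalities separately, since both follow from standard Fréchet-type bounds on joint probabilities once we recognise, via the identity $m_I^\pi = \probplus{\pi}{\v\Gamma_I = \v 1}$ already noted after the definition of cross-moments, that we are really bounding the probability of an intersection of events $A_i = \set{\Gamma_i = 1}$ in terms of the marginal probabilities $m_i = \prob{A_i}$.

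For the upper bound, I would just exploit monotonicity of probability: for every $K \subseteq I$ one has $\set{\v\Gamma_I = \v 1} \subseteq \set{\v\Gamma_K = \v 1}$, hence $m_I \leq m_K$, and taking the minimum over all such $K$ yields the stated inequality. Since the bound is already attained by the singleton terms $K = \set{k}$, it suffices to argue sharpness for $m_I \leq \min_{k \in I} m_k$: pick $k^\star \in \argmin_{k \in I} m_k$ and build a distribution in which the event $\set{\Gamma_{k^\star} = 1}$ implies $\set{\Gamma_j = 1}$ for every other $j \in I$ almost surely, for instance by coupling all $\Gamma_j$, $j \in I$, to a common uniform variable using the quantile construction and leaving $\Gamma_{-I}$ independent with arbitrary marginals.

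For the lower bound, I would apply the elementary Bonferroni / union bound to the complementary events $A_i^c = \set{\Gamma_i = 0}$:
\begin{equation*}
1 - m_I = \probplus{\pi}{\textstyle\bigcup_{i \in I}\set{\Gamma_i=0}} \leq \sum_{i \in I}(1-m_i) = \card{I} - \sum_{i \in I} m_i,
\end{equation*}
which rearranges to $m_I \geq \sum_{i \in I} m_i - \card I + 1$. Coupled with the trivial $m_I \geq 0$, this gives the stated lower bound. Sharpness will need two cases. When $\sum_i m_i \leq \card I - 1$, one constructs a distribution on $\B^{\card I}$ in which the events $A_i^c$ are pairwise disjoint (i.e.\ at most one coordinate is zero at a time), which is feasible precisely because their probabilities sum to at most one, turning the union bound into equality and yielding $m_I = \max\set{\sum_i m_i - \card I + 1,0}$. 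When the $m_i$'s are too small for this, one uses an analogous construction on the $A_i$'s making the event $\set{\v\Gamma_I = \v 1}$ empty, so that $m_I = 0$ is attained.

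None of the steps is technically hard; the mild obstacle is packaging the sharpness claim cleanly, because the construction that attains the lower bound differs according to whether $\sum_i m_i$ exceeds $\card I - 1$ or not, and one has to verify that the resulting joint laws on $\B^{\card I}$ do have the prescribed marginals $m_i$. I would therefore spend most of the written proof on the two explicit couplings and only sketch the probability-monotonicity and union-bound steps.
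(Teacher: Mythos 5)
Your proofs of the two inequalities themselves are correct and essentially identical to the paper's. The upper bound is monotonicity of the measure in both treatments, and your Bonferroni bound $1-m_I=\prob{\bigcup_{i\in I}\set{\Gamma_i=0}}\leq\sum_{i\in I}(1-m_i)$ is exactly the expectation of the pointwise inequality $\sum_{i\in I}\gamma_i-\prod_{i\in I}\gamma_i\leq\card{I}-1$ that the paper integrates against $\pi$; the two arguments differ only in notation. The paper stops there and does not verify sharpness, so that part of your plan is additional content rather than a reproduction.

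However, your sharpness argument for the lower bound has the two cases swapped, and as written the construction in your first case fails. The events $A_i^c=\set{\Gamma_i=0}$ have probabilities $1-m_i$, so they can be made pairwise disjoint precisely when $\sum_{i\in I}(1-m_i)\leq 1$, that is when $\sum_{i\in I}m_i\geq\card{I}-1$; this is the regime in which the union bound can be tight and $m_I=\sum_{i\in I}m_i-\card{I}+1\geq 0$ is attained. You instead invoke disjointness when $\sum_{i\in I}m_i\leq\card{I}-1$, where the probabilities of the $A_i^c$ sum to at least one, cannot be disjoint, and where equality in the union bound would yield the impossible value $m_I=\sum_{i\in I}m_i-\card{I}+1<0$. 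In that regime you need the complementary construction: since $\sum_{i\in I}(1-m_i)\geq1$, you can place the $A_i^c$ as consecutive arcs modulo $1$ on $[0,1)$ so that they cover the whole space, forcing $\bigcap_{i\in I}A_i=\emptyset$ and hence $m_I=0$. This single arc construction in fact handles both regimes at once (disjoint arcs when the lengths sum to at most one, covering arcs otherwise) and makes the marginal verification you worry about immediate. Your comonotone coupling for the upper bound is fine.
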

\begin{proof}
The lower bound follows from
\begin{align*}
\textstyle
\abs I -1
=\sum_{\v\gamma\in\m\B^d}(\vert I\vert-1)\pi(\v\gamma) 
\geq\sum_{\v\gamma\in\m\B^d}\left(\sum_{i\in I}\gamma_i-\prod_{i\in I}\gamma_i\right)\pi(\v\gamma)
=\sum_{i\in I}m_i-m_I,
\end{align*}
the upper bound is the monotonicity of the measure.
\end{proof}
For the special case $\abs{I}=2$, Proposition \ref{eq:bin bounds} is a well-known result and has been invoked in several articles dealing with correlated binary data. For the general case, we remark that a mapping $f\colon[0,1]^{\abs{I}}\to[0,1],\ f_{I}(m_{i_1},\dots,m_{i_{\abs{I}}})=m_{I}$, which assigns a cross-moment $m_{I}$ for $I\subseteq D$ as function of the marginals $m_i$ for $i\in I$, is quite similar to a $\abs{I}$-dimensional copula and the inequalities \eqref{eq:bin bounds} are exactly the Fr\'echet-Hoeffding bounds \citep[ch. 2]{nelsen2006introduction}.
\begin{definition}
\label{def:cross-moment matrix}
We say a $d\times d$ symmetric matrix $\m M\eqdef(m_{ij})$ with entries in $(0,1)$ is a \emph{cross-moment matrix of binary data} if $\m M - \diag(\m M)\diag(\m M)\t$ is positive definite and condition \eqref{eq:bin bounds} holds for all $I\subseteq D$ with $\abs{I}=2$.
\end{definition}
We derive the family of distributions which, under the constraints that $\pi$ has given cross-moments, maximizes the entropy $H(\pi)=-\sum_{\v\gamma\in\B^{d}}\pi(\v\gamma)\log[\pi(\v\gamma)]$. The following proposition is just a special case of a more general concept \citep{soofi1994capturing}.
\begin{proposition}
\label{prop:maxent}
Let $\mathcal I\subseteq 2^{D}$ be a family of index sets such that $\set{m_I\colon I\in\mathcal I}$ is a valid set of cross-moments. The maximum entropy distribution having the specified cross-moments has the form $\q(\v \gamma)=\exp(\sum_{I\in \mathcal I}a_I\prod_{i\in I}\gamma_i)/ [\sum_{\v\gamma\in\B^{d}}\exp(\sum_{I\in \mathcal I}a_I\prod_{i\in I}\gamma_i)]$.
\end{proposition}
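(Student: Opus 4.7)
The plan is to apply the standard method of Lagrange multipliers to the constrained optimization problem, exploiting the concavity of $H$ and the convexity of the feasible set.

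First, I would set up the problem as maximizing the strictly concave functional $H(\pi)=-\sum_{\v\gamma\in\B^d}\pi(\v\gamma)\log\pi(\v\gamma)$ over the simplex of probability mass functions on $\B^{d}$, subject to the linear constraints $\sum_{\v\gamma\in\B^d}\pi(\v\gamma)\prod_{i\in I}\gamma_i=m_I$ for each $I\in\mathcal{I}$, together with $\sum_{\v\gamma\in\B^d}\pi(\v\gamma)=1$. By hypothesis the $m_I$ form a valid set of cross-moments, so the feasible set is non-empty; it is also closed and bounded (a compact subset of the simplex), and $H$ is continuous, so a maximizer exists. Strict concavity of $H$ on the relative interior guarantees uniqueness provided the maximizer assigns positive mass to every $\v\gamma\in\B^d$, which will be evident from the exponential form.

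Next, I would form the Lagrangian
\begin{equation*}
\mathcal L(\pi,\v a,a_0)=-\sum_{\v\gamma\in\B^d}\pi(\v\gamma)\log\pi(\v\gamma)-\sum_{I\in\mathcal I}a_I\Bigl(\sum_{\v\gamma\in\B^d}\pi(\v\gamma)\prod_{i\in I}\gamma_i-m_I\Bigr)-a_0\Bigl(\sum_{\v\gamma\in\B^d}\pi(\v\gamma)-1\Bigr),
\end{equation*}
differentiate with respect to each $\pi(\v\gamma)$, and set the derivative to zero. This yields $\log\pi(\v\gamma)=-1-a_0-\sum_{I\in\mathcal I}a_I\prod_{i\in I}\gamma_i$, i.e.\ $\pi(\v\gamma)\propto\exp(\sum_{I\in\mathcal I}a_I\prod_{i\in I}\gamma_i)$ after absorbing $-1-a_0$ and renormalizing, which is exactly the claimed form. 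Because $H$ is concave and the constraints are linear, any stationary point of $\mathcal L$ that is feasible is a global maximum, so it suffices to verify feasibility.

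The main obstacle, which I would handle last, is showing that the Lagrange multipliers $\v a$ can actually be chosen so that the resulting exponential-form $\q$ matches the prescribed cross-moments $\set{m_I\colon I\in\mathcal I}$. The cleanest argument is via convex duality: the log-partition function $\Psi(\v a)=\log\sum_{\v\gamma\in\B^d}\exp(\sum_{I\in\mathcal I}a_I\prod_{i\in I}\gamma_i)$ is strictly convex on $\R^{|\mathcal I|}$ (provided the monomials $\prod_{i\in I}\gamma_i$ are linearly independent as functions on $\B^d$, which they are since distinct index sets correspond to distinct multilinear monomials), its gradient equals the vector of cross-moments under $\q$, and its image covers the interior of the convex hull of the attainable moment vectors. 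Since the target $\set{m_I}$ is assumed to arise from some distribution on $\B^d$, it lies in this image, so a suitable $\v a$ exists. This completes the characterization.
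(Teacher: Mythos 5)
Your proposal follows essentially the same route as the paper: form the Lagrangian for the entropy maximization under the moment constraints, differentiate with respect to $\pi(\v\gamma)$, and read off the exponential form from the first-order condition. You additionally supply the existence/uniqueness discussion and the convex-duality argument for why multipliers matching the prescribed cross-moments exist (which, strictly, requires the moment vector to lie in the relative interior of the moment polytope); the paper omits these details but the core argument is identical.
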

\begin{proof} Define the Lagrange multipliers
$L(\pi,\v a)=\sum_{I\in \mathcal I}a_{I}[\sum_{\v\gamma\in\B^{d}}\pi(\v\gamma)\prod_{i\in I}\gamma_i-m_I]$
and differentiate $\partial[H(\pi)+L(\pi,\v a)]/\partial \pi(\v\gamma)=-\log[\pi(\v\gamma)]-1+\sum_{I\in \mathcal I}a_{I}\prod_{i\in I}\gamma_i$. Solving the first order condition and normalizing completes the proof.
\end{proof}

\subsection{Standardized cross-moments}
\begin{definition}
For a set $I\subseteq D$, we define
$
\textstyle u_I^{\pi}(\v\gamma)\eqdef \prod_{i\in I}(\gamma_i-m_i^{\pi})[m_i^{\pi}(1-m_i^{\pi})]^{-1/2}
$
and refer to $c_I^{\pi}\eqdef\evplus{\pi}{u_I^{\pi}(\v \Gamma)}$ as the (generalized) correlation coefficient indexed by $I$.
\end{definition}

A $d\times d$ positive definite matrix $\m C$ with entries in $[-1,1]$ and $\diag(\m C)=\v 1$ is not the correlation matrix of a binary distribution for every mean vector $\v m\in(0,1)^{d}$. In fact, $\m C$ is a correlation matrix if and only if $\m M=\m C\cdot\v s\v s\t+\v m\v m\t$ is valid in the sense of Definition \ref{def:cross-moment matrix}, where the dot means point-wise multiplication and $s_i^{2}\eqdef m_{i}(1-m_{i})$. \cite{chaganty2006range} elaborate alternative conditions for compatibility between correlations and means, but these do not seem easier to express or to check.

In the context of binary data, the notion of ``strong correlations'' refers to correlation coefficients which are at the boundary of the feasible range with respect to the mean vector. Note that the absolute value of the correlation coefficient does, in itself, not tell whether the correlation is easy or difficult to model. The following statement relates the notions of uncorrelated and independent variables.

\begin{proposition}
Let $\v X$ be a $d$-dimensional binary random vector. For $d=2$, entries are uncorrelated if and only if they are independent. For $d\geq 3$, entries might be mutually uncorrelated but not independent.
\end{proposition}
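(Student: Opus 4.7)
My plan is to treat the two claims separately, since they go in opposite directions.

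For the forward direction at $d=2$, I would start from the assumption that $\mathrm{Cov}(X_1,X_2)=0$, which by definition of the correlation coefficient is equivalent to $m_{\{1,2\}}=m_1m_2$. Since $X_1,X_2\in\B$, the cross-moment is the joint probability $\probplus{\pi}{X_1=1,X_2=1}$, so already one of the four cells of the joint mass function factorizes. The remaining three cells I would fill in by the inclusion-exclusion identities, e.g.\ $\probplus{\pi}{X_1=1,X_2=0}=m_1-m_1m_2=m_1(1-m_2)$, and similarly for the other two. This exhibits $\pi$ as the product family $\q^{\Prod}_{\v m}$, which is independence. The converse (independence $\Rightarrow$ uncorrelated) is immediate from the factorization of expectations.

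For $d\geq 3$, I would construct a counterexample in dimension $3$ and then note that it extends to all larger $d$ by appending independent Bernoulli coordinates. The canonical construction is to take $X_1,X_2$ independent symmetric Bernoulli variables and set $X_3=X_1\oplus X_2$ (the mod-$2$ sum). Then $X_3$ is also symmetric Bernoulli, each pair $(X_i,X_j)$ has joint mass function uniform on $\B^{2}$ and is therefore independent (and in particular uncorrelated), so the correlation matrix is the identity. Yet $\probplus{\pi}{X_1=X_2=X_3=1}=0\neq 1/8=m_1m_2m_3$, so the triple is not mutually independent.

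The first part is a routine bookkeeping argument with four cells and no real obstacle. The only thing to be careful about is phrasing ``uncorrelated'' unambiguously: in the binary bivariate case $c_{\{1,2\}}^{\pi}=0$ is equivalent to $\mathrm{Cov}(X_1,X_2)=0$ because $m_1,m_2\in(0,1)$ makes the standardizing denominator strictly positive, so dividing by it does not lose information. For the second part, the main subtlety is being explicit about what ``mutually uncorrelated but not independent'' means, namely that all pairwise correlation coefficients vanish while the joint law does not factor as $\prod_i \ber(m_i)$; the XOR example makes this contrast as sharp as possible since all pairs are fully independent, not merely uncorrelated.
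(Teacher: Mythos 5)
Your proof is correct and follows essentially the same route as the paper: the $d=2$ case is the same four-cell bookkeeping via $p_{11}=m_1m_2$, $p_{10}=m_1(1-m_2)$, $p_{01}=(1-m_1)m_2$, $p_{00}=(1-m_1)(1-m_2)$, and your XOR construction yields exactly the paper's counterexample $p_{000}=p_{011}=p_{101}=p_{110}=1/4$ with $p_{111}=0\neq 1/8$. The only (harmless) additions are the generative description of the counterexample and the remark on padding with independent coordinates for $d>3$.
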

\begin{proof}
Let $p_{x_1x_2}\eqdef\prob{\Gamma_1=x_1,\Gamma_2=x_2}$. By definition $p_{11}=m_{12}=m_1m_2$. Further, we obtain $p_{10}=m_1-m_{12}=m_1(1-m_2)$ and, analogously, $p_{01}=(1-m_1)m_2$. Finally, we have $p_{00}=1+m_{12}-m_{1}-m_{2}=(1-m_1)(1-m_2)$. For $d\geq3$, let for instance $p_{000}=p_{011}=p_{101}=p_{110}=1/4$ and $p_{100}=p_{010}=p_{001}=p_{111}=0$. The entries are mutually uncorrelated, but not independent since $p_{111}=0\neq 1/8=m_1m_2m_3$.
\end{proof}

The following representation by \citet{bahadur61representation} allows to write a binary distribution in terms of its generalized correlation coefficients.
\begin{proposition}
\label{prop:bahadur}
Let $\pi$ be a binary distribution with mean $\v m\in(0,1)^{d}$. Then,
\begin{equation*}
\pi(\v\gamma)=\q^\Prod_{\v m}(\v\gamma)\, \left[\textstyle\sum_{I\subseteq D} c_I^{\pi} u_I^{\pi}(\v\gamma)\right].
\end{equation*}
\end{proposition}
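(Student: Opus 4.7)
The plan is to expand the likelihood ratio $\pi(\v\gamma)/\q_{\v m}^\Prod(\v\gamma)$ in an orthonormal basis of $L^2(\q_{\v m}^\Prod)$ consisting of the standardized monomials $u_I^\pi$, and observe that the resulting Fourier coefficients are precisely the generalized correlations $c_I^\pi$.

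First, I would set up the Hilbert space structure. Since $\q_{\v m}^\Prod$ assigns strictly positive mass to each of the $2^d$ points of $\B^d$, the space of real functions on $\B^d$ equipped with the inner product $\langle f,g\rangle \eqdef \evplus{\q_{\v m}^\Prod}{f(\v\Gamma)g(\v\Gamma)}$ is a $2^d$-dimensional Hilbert space. Under $\q_{\v m}^\Prod$ the coordinates $\Gamma_1,\dots,\Gamma_d$ are independent Bernoulli variables, so the standardized variables $U_i(\v\gamma)\eqdef(\gamma_i-m_i)/\sqrt{m_i(1-m_i)}$ are independent, centred, and of unit variance. Consequently, for any $I,J\subseteq D$,
\begin{equation*}
\langle u_I^\pi,u_J^\pi\rangle
= \prod_{i\in I\triangle J}\evplus{\q_{\v m}^\Prod}{U_i}\prod_{i\in I\cap J}\evplus{\q_{\v m}^\Prod}{U_i^2}
= \ind_{I=J},
\end{equation*}
so $\{u_I^\pi\colon I\subseteq D\}$ is an orthonormal family of cardinality $2^d$, hence an orthonormal basis.

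Second, I would apply this basis to the bounded function $r(\v\gamma)\eqdef \pi(\v\gamma)/\q_{\v m}^\Prod(\v\gamma)$, which is well defined because $\q_{\v m}^\Prod$ is strictly positive. Expanding $r=\sum_{I\subseteq D}\langle r,u_I^\pi\rangle\,u_I^\pi$ and computing the coefficients,
\begin{equation*}
\langle r,u_I^\pi\rangle
= \sum_{\v\gamma\in\B^d}\q_{\v m}^\Prod(\v\gamma)\,\frac{\pi(\v\gamma)}{\q_{\v m}^\Prod(\v\gamma)}\,u_I^\pi(\v\gamma)
= \evplus{\pi}{u_I^\pi(\v\Gamma)}
= c_I^\pi.
\end{equation*}
Multiplying the expansion $\pi(\v\gamma)/\q_{\v m}^\Prod(\v\gamma)=\sum_{I\subseteq D}c_I^\pi u_I^\pi(\v\gamma)$ through by $\q_{\v m}^\Prod(\v\gamma)$ yields the claimed identity.

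There is no real obstacle here; the only thing one has to be careful about is the orthonormality computation, which hinges on the independence of the $U_i$ under the product measure and on the fact that $\evplus{\q_{\v m}^\Prod}{U_i}=0$ kills every cross term whenever $I\ne J$. The term $I=\emptyset$ in the final sum contributes $c_\emptyset^\pi u_\emptyset^\pi=1$, reproducing the independence case when all higher $c_I^\pi$ vanish, which is a useful sanity check.
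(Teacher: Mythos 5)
Your proof is correct and follows essentially the same route as the paper's: both expand the likelihood ratio $\pi/\q_{\v m}^{\Prod}$ in the orthonormal basis $\set{u_I^{\pi}\colon I\subseteq D}$ of functions on $\B^{d}$ under the $\q_{\v m}^{\Prod}$-weighted inner product and identify the Fourier coefficients as the $c_I^{\pi}$. Your explicit verification of orthonormality via independence of the standardized coordinates is a detail the paper merely asserts, but the argument is the same.
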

\begin{proof}
We give the proof by \citet{bahadur61representation} using the notation introduced above. The set $\set{u_I^{\pi}\colon I\subseteq D}$ forms an orthonormal basis on $\mathcal F\eqdef\set{f\colon \B^{d}\to\R}$ with respect to the inner product
$
\textstyle
(f,g)=\evplus{\q^\Prod_{\v m}}{f(\v \Gamma)g(\v X)}=\sum_{\v\gamma\in \B^d} f(\v\gamma)g(\v\gamma)\q^\Prod_{\v m}(\v\gamma).
$
Therefore, every function $f\in\mathcal F$ has a unique representation $f(\v\gamma)=\sum_{I\subseteq D}(f,u_{I}^{\pi})u_{I}^{\pi}(\v\gamma)$. Compute the inner products
$
\textstyle
(\pi/\q^\Prod_{\v m},u_I^{\pi})
=\sum_{\v\gamma\in \B^d} [\pi(\v\gamma)/\q^\Prod_{\v m}(\v\gamma)]u_I^{\pi}(\v\gamma)\q^\Prod_{\v m}(\v\gamma)
=\evplus{\pi}{u_I^{\pi}(\v \Gamma)}=c_I^{\pi}
$
to obtain the desired form $\pi(\v\gamma)/\q^\Prod_{\v m}(\v\gamma)=\sum_{I\subseteq D} c_I^{\pi}u_I^{\pi}(\v\gamma)$.
\end{proof}
Using Proposition \ref{prop:bahadur}, we may bound the $l^{p}$ distance between two binary distribution with the same mean in terms of nearness of their correlation coefficients.
\begin{proposition}
Let $\pi$ and $\omega$ be binary distributions with mean $\v m\in(0,1)^{d}$. For $p\geq1$,
\begin{equation*}
\textstyle
\sum_{\v\gamma\in\B^{d}}\abs{\pi(\v\gamma)-\omega(\v\gamma)}^{p}
\leq \sum_{I\subseteq D}2^{(1-\min\set{p,2})\abs{I}} \vert c_{I}^{\pi}-c_{I}^{\omega}\vert^{p}
\leq (1+r)^{d}-dr-1
\end{equation*}
where $r=2^{1-\min\set{p,2}}\max_{I \subseteq D} \vert c_{I}^{\pi}-c_{I}^{\omega}\vert^{p/\abs{I}}$.
\end{proposition}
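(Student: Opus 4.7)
The starting point is Bahadur's representation (Proposition~\ref{prop:bahadur}). Since $\pi$ and $\omega$ share the mean $\v m$, the orthonormal system $\{u_I\}$ is common to both, so subtracting the two expansions yields the pointwise identity $\pi(\v\gamma)-\omega(\v\gamma)=\q^{\Prod}_{\v m}(\v\gamma)\sum_{I\subseteq D}\delta_I u_I(\v\gamma)$ with $\delta_I\eqdef c_I^\pi-c_I^\omega$. Matching means (together with $\pi$ and $\omega$ both integrating to one) force $\delta_\emptyset=0$ and $\delta_{\{i\}}=0$, so the sum effectively runs over $\abs{I}\geq2$; this is exactly what produces the $-1-dr$ correction on the far right of the second inequality. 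A direct computation gives the useful factorisation $\q^{\Prod}_{\v m}(\v\gamma)u_I(\v\gamma)=\prod_{i\in I}s_i(2\gamma_i-1)\prod_{i\notin I}m_i^{\gamma_i}(1-m_i)^{1-\gamma_i}$, and because $\abs{2\gamma_i-1}=1$ and $s_i\eqdef\sqrt{m_i(1-m_i)}\leq 1/2$ this delivers the uniform bound $\abs{\q^{\Prod}_{\v m}u_I}\leq 2^{-\abs{I}}\prod_{i\notin I}m_i^{\gamma_i}(1-m_i)^{1-\gamma_i}$.

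My plan for the first (central) inequality is to establish the endpoints $p=1$ and $p=2$ by hand and to obtain the intermediate range by Riesz--Thorin interpolation. At $p=1$ the triangle inequality together with the elementary identity $\evplus{\q^{\Prod}_{\v m}}{\abs{u_I}}=\prod_{i\in I}2s_i\leq 1$ gives $\sum_{\v\gamma}\abs{\pi-\omega}\leq\sum_I\abs{\delta_I}$, which matches the target bound with weight $2^0=1$. At $p=2$ I would expand $\sum_{\v\gamma}(\pi-\omega)^2=\v\delta\t\m K\v\delta$, where $K_{IJ}\eqdef\sum_{\v\gamma}(\q^{\Prod}_{\v m})^2u_Iu_J$, and observe that the factorisation above writes $\m K$ as a tensor product $\bigotimes_{i=1}^d\m K_i$ of explicit $2\times2$ blocks (trace $1$, determinant $s_i^2$). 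A short determinant check shows $\m W_i-\m K_i\succeq 0$ for $\m W_i=\diag(1,1/2)$; tensor-product monotonicity in the positive semidefinite order then yields $\m K\preceq\bigotimes_i\m W_i=\diag(2^{-\abs{I}})$, so $\sum_{\v\gamma}(\pi-\omega)^2\leq\sum_I 2^{-\abs{I}}\delta_I^2$. Viewing $T\colon\v\delta\mapsto\pi-\omega$ as a linear map between weighted $\ell^p$ spaces and interpolating the endpoints $T\colon\ell^1(1)\to\ell^1$ and $T\colon\ell^2(2^{-\abs{I}})\to\ell^2$ via Riesz--Thorin reproduces exactly the weight $2^{(1-p)\abs{I}}$ on the source space for $p\in[1,2]$.

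The main obstacle will be the regime $p>2$: naive Minkowski, the inclusion $\ell^p\hookrightarrow\ell^2$ for counting measure, or power-mean estimates all lose a factor growing like $(3/2)^{d(p/2-1)}$ compared to the stated bound, so one cannot simply upgrade the $p=2$ estimate. The cleanest route I see is to interpolate the $p=2$ bound against the $L^\infty$ endpoint $\max_{\v\gamma}\abs{\pi-\omega}\leq\sum_I 2^{-\abs{I}}\abs{\delta_I}$ delivered by the uniform pointwise estimate above; this falls outside the standard same-weight $L^p$--$L^p$ Riesz--Thorin framework and requires careful tracking of both source and target weights to recover the factor $2^{-\abs{I}}\abs{\delta_I}^p$.

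The second inequality is a routine rearrangement. Factoring $2^{(1-\min\{p,2\})\abs{I}}\abs{\delta_I}^p=(2^{1-\min\{p,2\}}\abs{\delta_I}^{p/\abs{I}})^{\abs{I}}=r_I^{\abs{I}}$ and using $r_I\leq r$ by the definition of $r$, the surviving sum over $\abs{I}\geq 2$ equals $\sum_{k=2}^d\binom{d}{k}r^k=(1+r)^d-1-dr$, which is the stated upper bound.
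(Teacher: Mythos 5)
Your treatment of $1\le p\le 2$ is correct and is in fact more careful than the paper's own argument. The paper pulls the $p$-th power through the sum over $I$, writing $\sum_{\v\gamma}\vert\q^\Prod_{\v m}\sum_I\delta_Iu_I\vert^p\leq\sum_I\vert\delta_I\vert^p\,\evplus{\q^\Prod_{\v m}}{\abs{u_I}^{p}}$ — which is the triangle inequality at $p=1$ but is not a valid step for $p>1$ — and then bounds $\evplus{\q^\Prod_{\v m}}{\abs{u_i}^{p}}$ by $s_i[m_i^{p-1}+(1-m_i)^{p-1}]$, whereas the exact value is $s_i^{2-p}[m_i^{p-1}+(1-m_i)^{p-1}]$ with $s_i^{2-p}\geq s_i$ for $p\geq1$ (at $p=2$ orthonormality gives $\evplus{\q^\Prod_{\v m}}{u_I^{2}}=1$, not $\prod_{i\in I}s_i$). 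Your route avoids both problems: the tensor factorisation of $K_{IJ}=\sum_{\v\gamma}(\q^\Prod_{\v m})^{2}u_Iu_J$ into $2\times2$ blocks of trace $1$ and determinant $s_i^{2}$ checks out, $\m W_i-\m K_i$ indeed has determinant $0$ and nonnegative diagonal (hence is positive semidefinite), and Riesz--Thorin with change of weights (Stein--Weiss) between the $p=1$ and $p=2$ endpoints reproduces the weight $2^{(1-p)\abs{I}}$ exactly. The bookkeeping of vanishing coefficients is also right: only $\delta_\emptyset$ and $\delta_{\set{i}}$ vanish under the stated hypotheses (the paper's ``$\abs{I}\leq2$'' is a typo for ``$\abs{I}\leq1$'', consistent with its own sum over $\abs{I}\geq2$), and the closing binomial identity is routine.

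The gap you flag at $p>2$ is genuine and, unfortunately, cannot be closed. Structurally, your two endpoints $T\colon\ell^{1}(2^{-\abs I})\to\ell^{\infty}$ and $T\colon\ell^{2}(2^{-\abs I})\to\ell^{2}$ interpolate along the dual line $1/p+1/q=1$ and so can never yield a diagonal $\ell^{p}\to\ell^{p}$ bound for $p>2$; but no other route will succeed either, because the claimed inequality is false in that regime. Take $d=10$, $m_i=0.9$ for all $i$, $\omega=\q^\Prod_{\v m}$ and $\pi=\q^\Prod_{\v m}\,(1+\varepsilon\sum_{\abs I\geq2}u_I)$ with $\varepsilon>0$ small enough that $\pi\geq0$; this is a valid distribution with the same mean and $\delta_I=\varepsilon\,\ind_{\abs I\geq2}$. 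At $\v\gamma=\v1$ one has $u_I(\v 1)=(1/3)^{\abs I}$, hence
\begin{equation*}
(\pi-\omega)(\v 1)=\varepsilon\,(0.9)^{10}\bigl[(4/3)^{10}-1-\tfrac{10}{3}\bigr]\approx 4.68\,\varepsilon,
\qquad
\sum_{\abs I\geq 2}2^{-\abs I}\vert\delta_I\vert^{3}=\varepsilon^{3}\bigl[(3/2)^{10}-6\bigr]\approx 51.7\,\varepsilon^{3},
\end{equation*}
so for $p=3$ the left-hand side already exceeds $103\,\varepsilon^{3}$ at the single point $\v\gamma=\v 1$ while the claimed upper bound is about $51.7\,\varepsilon^{3}$; the ratio is independent of $\varepsilon$. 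The mechanism is that $\q^\Prod_{\v m}(\v1)\sum_{\abs I\geq2}u_I(\v1)$ grows like $(m+s)^{d}$ with $m+s>1$ away from $m=1/2$, while the right-hand side grows only like $(3/2)^{d}$. The paper's own proof of the $p>2$ case rests on exactly the two unjustified steps noted above, so there is nothing there to recover; you should regard the proposition as established only for $1\leq p\leq 2$, where your argument is a complete and correct replacement for the paper's.
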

\begin{proof}
Since $u_{I}^{\pi}=u_{I}^{\omega}$ for all $I\subseteq D$, applying Proposition \ref{prop:bahadur} yields
\begin{align*}
\textstyle
\sum_{\v\gamma\in\B^{d}}\abs{\pi(\v\gamma)-\omega(\v\gamma)}^{p}
&=    \textstyle\sum_{\v\gamma\in\B^{d}}\abs{\q^\Prod_{\v m}(\v\gamma)\sum_{I\subseteq D}u_I^{\pi}(\v\gamma)(c_I^{\pi}-c_I^{\omega})}^{p} \\
&\leq \textstyle\sum_{I\subseteq D} \vert c_I^{\pi}-c_I^{\omega} \vert^{p}\, \evplus{\q^{\Prod}_{\v m}}{\abs{u_I^{\pi}(\v \Gamma)}^{p}}.
\end{align*}
Using that $x^{p-1}+(1-x)^{p-1}\leq 2^{2-\min\set{p,2}}$ for all $x\in(0,1)$, we obtain the bound
\begin{equation*}
\textstyle\evplus{\q^{\Prod}_{\v m}}{\abs{u_I^{\pi}(\v \Gamma)}^{p}}
\leq \prod_{i\in I} [m_i(1-m_i)]^{1/2}[m_i^{p-1}+(1-m_i)^{p-1}]
\leq 2^{(1-\min\set{p,2})\abs{I}}.
\end{equation*}
Finally, we have $\sum_{I\subseteq D}2^{(1-\min\set{p,2})\abs{I}} \vert c_{I}^{\pi}-c_{I}^{\omega}\vert^{p}\leq \sum_{I\subseteq D, \abs{I}\geq2} r^{\abs{I}}=(1+r)^{d}-dr-1$, since by definition $c_{I}^{\pi}=c_{I}^{\omega}$ for all $I\subseteq D$ with $\abs{I}\leq2$.
\end{proof}
\begin{corollary}
\label{corr: cross moments}
Let $\pi$ and $\q$ be binary distributions with cross-moment matrix $\m M$. Then we have 
$\sum_{\v\gamma\in\B^{d}}\abs{\pi(\v\gamma)-\q(\v\gamma)}^{p}\leq (1+r)^{d}-\frac{1}{2}d(d-1)r^{2}-dr-1$.
\end{corollary}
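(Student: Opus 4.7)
The plan is to reuse the previous proposition almost verbatim and exploit the extra hypothesis that $\pi$ and $\q$ agree on all first and second order cross-moments.

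First I would observe that if $\pi$ and $\q$ share the cross-moment matrix $\m M$, then in particular $m^\pi_i = m^\q_i$ for every $i\in D$ and $m^\pi_{ij} = m^\q_{ij}$ for every pair $i,j$. By the definitions of $c_I$ and $u_I$, the mean vector determines $u_I^\pi = u_I^\q$, and the means together with the pairwise cross-moments determine $c_I^\pi = c_I^\q$ for every $I\subseteq D$ with $\abs{I}\leq 2$ (the cases $\abs{I}=0,1$ are trivial). Consequently the hypothesis of the previous proposition is satisfied, and in the bound
\[
\textstyle\sum_{\v\gamma\in\B^{d}}\abs{\pi(\v\gamma)-\q(\v\gamma)}^{p}
\leq \sum_{I\subseteq D}2^{(1-\min\set{p,2})\abs{I}} \vert c_{I}^{\pi}-c_{I}^{\q}\vert^{p}
\]
the contributions of all index sets with $\abs{I}\leq 2$ drop out.

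The remaining step is purely combinatorial. With $r=2^{1-\min\set{p,2}}\max_{I\subseteq D}\vert c_{I}^{\pi}-c_{I}^{\q}\vert^{p/\abs{I}}$ as in the proposition, each summand satisfies $2^{(1-\min\set{p,2})\abs{I}}\vert c_{I}^{\pi}-c_{I}^{\q}\vert^{p}\leq r^{\abs{I}}$, so I would bound
\[
\textstyle\sum_{I\subseteq D,\,\abs{I}\geq 3} 2^{(1-\min\set{p,2})\abs{I}}\vert c_{I}^{\pi}-c_{I}^{\q}\vert^{p}
\leq \sum_{k=3}^{d}\binom{d}{k} r^{k},
\]
and then complete the binomial sum back up to $(1+r)^{d}$ by subtracting the missing $k=0,1,2$ terms:
\[
\textstyle\sum_{k=3}^{d}\binom{d}{k}r^{k}=(1+r)^{d}-1-dr-\tfrac{1}{2}d(d-1)r^{2},
\]
which is precisely the announced bound.

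There is no real obstacle here beyond bookkeeping: the work was already done in the previous proposition, and the only new ingredient is the elementary fact that equal cross-moment matrices forces $c^\pi_I=c^\q_I$ for $\abs{I}\leq 2$. I would keep the proof short, essentially a one-line appeal to the previous proposition followed by the binomial identity above.
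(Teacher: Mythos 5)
Your argument is correct and is exactly the one the paper leaves implicit: the corollary follows from the preceding proposition by noting that equal cross-moment matrices force $c_I^{\pi}=c_I^{\q}$ for all $\abs{I}\leq 2$, so the surviving terms are $\sum_{k=3}^{d}\binom{d}{k}r^{k}=(1+r)^{d}-\tfrac{1}{2}d(d-1)r^{2}-dr-1$. Nothing further is needed.
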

With regard to the Metropolis-Hastings kernel mentioned in the introductory section, the factor $\frac{1}{2}d(d-1)r^{2}$ in Corollary \ref{corr: cross moments} is the gain of a more complex proposal distribution $\q_{\m M}$ with $\m M=\m M^{\pi}=\m M^{\q}$ over a simple product model $\q_{\v m}^{\Prod}$ with $\v m=\v m^{\pi}=\v m^{\q}$.

The following result shows how the cross-moments of the proposal distribution affect the auto-covariance of the independent Metropolis-Hastings sampler.
\begin{proposition}
Let $\pi$ and $\q$ be binary distributions with mean $\v m\in(0,1)^{d}$ and denote by $\kappa(\v\gamma\mid\v x)\eqdef \q(\v\gamma)\lambda_{\q}(\v\gamma,\v x)+\delta_{\v x}(\v \gamma)[1-\textstyle\sum_{\v y\in\B^{d}}\q(\v y)\lambda_{\q}(\v y,\v x)]$ the Metropolis-Hastings kernel with invariant measure $\pi$ and proposal distribution $\q$ where $\lambda_{\q}(\cdot,\v x)$ is defined in \eqref{eq:acc prob}. The auto-covariance between $\v X\sim\pi$ and $\v \Gamma\sim\kappa(\cdot\mid \v X)$ is
\begin{equation*}
\evplus{\kappa,\pi}{\v \Gamma\v X\t}-\v m \v m\t=\frac{1}{2}(\m M^{\pi}-\m M^{\q})+\m R^{\kappa}
\end{equation*}
with $\m R^{\kappa}=(r^{\kappa}_{ij})$ where $\vert r^{\kappa}_{ij}\vert\leq \sum_{\v \gamma\in\B^{d}}\abs{\pi(\v\gamma)-\q(\v\gamma)}$.
\end{proposition}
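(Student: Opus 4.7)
The plan is to reduce everything to a single symmetric identity and then extract the remainder. Write $w(\v\gamma,\v x)\eqdef\pi(\v x)\q(\v\gamma)\lambda_{\q}(\v\gamma,\v x)$ and note that detailed balance is the simple statement $w(\v\gamma,\v x)=w(\v x,\v\gamma)$, since $\lambda_{\q}(\v\gamma,\v x)=\min\{1,\pi(\v\gamma)\q(\v x)/[\pi(\v x)\q(\v\gamma)]\}$. I will also set $\beta(\v x)\eqdef\sum_{\v y\in\B^{d}}\q(\v y)\lambda_{\q}(\v y,\v x)$, so that $\kappa(\v\gamma\mid\v x)=\q(\v\gamma)\lambda_{\q}(\v\gamma,\v x)+\delta_{\v x}(\v\gamma)(1-\beta(\v x))$.

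First I would insert $\kappa$ into $T_{ij}\eqdef\evplus{\kappa,\pi}{\Gamma_{i}X_{j}}$ and separate the move and stay parts, obtaining $T_{ij}=A_{ij}+m^{\pi}_{ij}-B_{ij}$ with $A_{ij}\eqdef\sum\gamma_{i}x_{j}\,w(\v\gamma,\v x)$ and $B_{ij}\eqdef\sum x_{i}x_{j}\,w(\v\gamma,\v x)$. Renaming indices and applying detailed balance shows $A_{ij}=\sum x_{i}\gamma_{j}\,w(\v\gamma,\v x)$ and $B_{ij}=\sum\gamma_{i}\gamma_{j}\,w(\v\gamma,\v x)$; combining gives the key identity
\begin{equation*}
T_{ij}=m^{\pi}_{ij}-\tfrac{1}{2}\textstyle\sum_{\v\gamma,\v x}(\gamma_{i}-x_{i})(\gamma_{j}-x_{j})\,w(\v\gamma,\v x),
\end{equation*}
which is the technical heart of the proof and the only place where reversibility is used.

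Next I would split $\lambda_{\q}=1-(1-\lambda_{\q})$ inside $w$. The ``$1$'' contribution yields the explicit term $\frac{1}{2}\sum(\gamma_{i}-x_{i})(\gamma_{j}-x_{j})\pi(\v x)\q(\v\gamma)=\frac{1}{2}(m^{\pi}_{ij}+m^{\q}_{ij})-m_{i}m_{j}$, since $\pi$ and $\q$ share the mean $\v m$. Substituting back produces exactly $T_{ij}-m_{i}m_{j}=\frac{1}{2}(m^{\pi}_{ij}-m^{\q}_{ij})+r^{\kappa}_{ij}$ with
\begin{equation*}
r^{\kappa}_{ij}=\tfrac{1}{2}\textstyle\sum_{\v\gamma,\v x}(\gamma_{i}-x_{i})(\gamma_{j}-x_{j})\,\pi(\v x)\q(\v\gamma)[1-\lambda_{\q}(\v\gamma,\v x)].
\end{equation*}

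Finally, since $\lvert(\gamma_{i}-x_{i})(\gamma_{j}-x_{j})\rvert\leq 1$ and $\pi(\v x)\q(\v\gamma)[1-\lambda_{\q}(\v\gamma,\v x)]=(\pi(\v x)\q(\v\gamma)-\pi(\v\gamma)\q(\v x))^{+}$, the swap $\v x\leftrightarrow\v\gamma$ shows the sum of the positive parts equals the sum of the negative parts, so it equals $\frac{1}{2}\sum\lvert\pi(\v x)\q(\v\gamma)-\pi(\v\gamma)\q(\v x)\rvert$. Inserting $\pm\q(\v x)\q(\v\gamma)$ inside the absolute value and applying the triangle inequality bounds this double sum by $2\sum_{\v\gamma}\lvert\pi(\v\gamma)-\q(\v\gamma)\rvert$, which delivers the claimed estimate (with room to spare). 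The only delicate step is the first one: once detailed balance has been used to collapse the four mixed sums into the single symmetric quadratic form, everything else is bookkeeping.
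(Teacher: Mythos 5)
Your proof is correct, and while it rests on the same two pillars as the paper's --- reversibility of the Metropolis--Hastings kernel and the triangle inequality $\sum_{\v\gamma,\v x}\vert\pi(\v x)\q(\v\gamma)-\pi(\v\gamma)\q(\v x)\vert\leq 2\sum_{\v\gamma}\vert\pi(\v\gamma)-\q(\v\gamma)\vert$ --- it organizes the computation differently. The paper never symmetrizes the indices: it keeps the asymmetric integrand $\gamma_ix_j-x_ix_j$ and disposes of the acceptance probability in one stroke via the identity $2\q(\v\gamma)\pi(\v x)\lambda_{\q}(\v\gamma,\v x)=\q(\v\gamma)\pi(\v x)+\q(\v x)\pi(\v\gamma)-\vert\q(\v\gamma)\pi(\v x)-\q(\v x)\pi(\v\gamma)\vert$, which simultaneously encodes detailed balance and isolates the remainder. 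You instead invoke detailed balance first to collapse the mixed sums into the symmetric quadratic form $-\frac{1}{2}\sum(\gamma_i-x_i)(\gamma_j-x_j)w(\v\gamma,\v x)$, and only then split $\lambda_{\q}=1-(1-\lambda_{\q})$, recognizing $\pi(\v x)\q(\v\gamma)[1-\lambda_{\q}(\v\gamma,\v x)]$ as the positive part of $\pi(\v x)\q(\v\gamma)-\pi(\v\gamma)\q(\v x)$. The two routes are algebraically equivalent, but yours buys something concrete: because the positive part sums to only half of $\sum\vert\pi(\v x)\q(\v\gamma)-\pi(\v\gamma)\q(\v x)\vert$, you obtain $\vert r^{\kappa}_{ij}\vert\leq\frac{1}{2}\sum_{\v\gamma}\vert\pi(\v\gamma)-\q(\v\gamma)\vert$, a factor of two sharper than the stated bound (the paper loses this factor by bounding $\vert\gamma_ix_j-x_ix_j\vert\leq 1$ against the full absolute difference). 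The symmetric form you derive is also the more recognizable object --- it is the Dirichlet-form expression for the lag-one autocovariance of a reversible chain --- so your presentation generalizes more transparently beyond binary state spaces.
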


\begin{proof}
We plug the definition of the kernel into the expected value and obtain
\begin{align*}
\evplus{\kappa,\pi}{\v \Gamma\v X\t}
&=\sum_{\v \gamma,\,\v x\in\B^{d}} \gamma_ix_j \kappa(\v\gamma\mid \v x)\pi(\v x) \\
&=\sum_{\v \gamma,\,\v x\in\B^{d}} \gamma_ix_j \q(\v\gamma)\lambda_{\q}(\v\gamma,\v x)\pi(\v x)
+\sum_{\v x\in\B^{d}}x_ix_j[1-\textstyle\sum_{\v y\in\B^{d}}\q(\v y)\lambda_{\q}(\v y,\v x)]\pi(\v x) \\
&=m_{ij}^{\pi}+\sum_{\v \gamma,\,\v x\in\B^{d}} (\gamma_ix_j-x_ix_j) \q(\v\gamma)\pi(\v x)\lambda_{\q}(\v\gamma,\v x) \\
&=m_im_j+\frac{1}{2}(m_{ij}^{\pi}-m_{ij}^{\q})+\frac{1}{2}\sum_{\v \gamma,\,\v x\in\B^{d}} (\gamma_ix_j-x_ix_j) \abs{\q(\v\gamma)\pi(\v x)-\q(\v x)\pi(\v\gamma)},
\end{align*}
where we used $2\q(\v\gamma)\pi(\v x)\lambda_{\q}(\v\gamma,\v x)=\q(\v\gamma)\pi(\v x)+\q(\v x)\pi(\v\gamma)-\abs{\q(\v\gamma)\pi(\v x)-\q(\v x)\pi(\v\gamma)}$.
The triangle inequality
\begin{align*}
&\sum_{\v \gamma,\,\v x\in\B^{d}} \abs{\q(\v\gamma)\pi(\v x)-\q(\v x)\pi(\v\gamma)}
=\!\!\sum_{\v \gamma,\,\v x\in\B^{d}} \abs{\q(\v\gamma)\pi(\v x)-\pi(\v\gamma)\pi(\v x)+\pi(\v\gamma)\pi(\v x)-\q(\v x)\pi(\v\gamma)} \\
\leq
&\sum_{\v \gamma,\,\v x\in\B^{d}} \left[\abs{\q(\v\gamma)-\pi(\v\gamma)}\pi(\v x)+\abs{\pi(\v x)-\q(\v x)}\pi(\v\gamma)\right]
=2\sum_{\v \gamma\in\B^{d}}\abs{\pi(\v\gamma)-\q(\v\gamma)}.
\end{align*}
yields the bound on $r^{\kappa}_{ij}\eqdef\frac{1}{2}\sum_{\v \gamma,\,\v x\in\B^{d}} (\gamma_ix_j-x_ix_j) \abs{\q(\v\gamma)\pi(\v x)-\q(\v x)\pi(\v\gamma)}$.
\end{proof}

\subsection{Structured correlations}

For some applications, it suffices to model structured dependencies, such as exchangeable $(c_{ij}=c)$, moving average $(c_{ij}=c\ind_{\abs{i-j}=1})$ or autoregressive $(c_{ij}=c^{\abs{i-j}})$ correlations for $i\neq j\in D$. There is a long series of articles concerned with efficient approaches to sampling binary vectors for structured correlations \citep{farrell2006nonlinear,qaqish2003family,oman2001modelling,lunn1998note,park1996simple}. In this paper, we focus on the problem of sampling binary data with arbitrary cross-moment matrix.

\section{Parametric families based on generalized linear models}
\label{sec:families from glms}
We want to construct a parametric family $\q$ for sampling independent random vectors with specified mean and correlations. Sampling in high dimensions, however, requires the computation of conditional distributions $\q(\gamma_i\mid \v\gamma_{1:i-1})$, and it is therefore convenient to define the parametric family directly in terms of its conditionals.
\begin{definition}
Let $\link\colon\overline\R\to[0,1]$ be a monotonic function and $\m A\eqdef(a_{ij})$ a $d\times d$ real-valued lower triangular matrix. We refer to
\begin{align*}
\textstyle
\q_{\m A}^{\Link}(\v\gamma)=\prod_{i=1}^d \left[\link(a_{ii}+\sum_{j=1}^{i-1}a_{ij}\gamma_j)\right]^{\gamma_i}\left[1-\link(a_{ii}+\sum_{j=1}^{i-1}a_{ij}\gamma_j)\right]^{1-\gamma_i},
\end{align*}
as the $\link$-conditionals family.
\end{definition}
\begin{proposition}
\label{cor:mu product}
Let $\link\colon\overline\R\to[0,1]$ be a monotonic bijection and $\v m\in(0,1)^d$ a mean vector. For $\m A=\diag[\link^{-1}(\v m)]$ we have $\q_{\m A}^{\Link}=\q_{\v m}^{\Prod}$.
\end{proposition}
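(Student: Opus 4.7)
The plan is a direct verification by substitution. Since $\m A = \diag[\link^{-1}(\v m)]$ is diagonal, every off-diagonal entry $a_{ij}$ with $j < i$ equals zero, so for each $i$ the argument of the link function appearing in the definition collapses to $a_{ii} + \sum_{j=1}^{i-1} a_{ij}\gamma_j = a_{ii} = \link^{-1}(m_i)$, independently of $\v\gamma$.

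Next I would invoke the hypothesis that $\link$ is a bijection onto $[0,1]$ to conclude $\link(a_{ii}) = \link(\link^{-1}(m_i)) = m_i$ for each $i \in D$. Plugging this identity into the definition of $\q_{\m A}^{\Link}(\v\gamma)$ term by term yields $\prod_{i=1}^{d} m_i^{\gamma_i}(1-m_i)^{1-\gamma_i}$, which is exactly $\q_{\v m}^{\Prod}(\v\gamma)$.

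There is essentially no obstacle here; the proposition is a consistency check showing that the $\link$-conditionals construction nests the product family as the special case in which all interaction parameters are switched off. The one implicit point worth mentioning is that $\link^{-1}$ is well-defined and finite on $(0,1)$, which follows from the monotonic bijection hypothesis $\link\colon\overline\R\to[0,1]$, so the diagonal matrix $\diag[\link^{-1}(\v m)]$ really makes sense for any $\v m\in(0,1)^d$.
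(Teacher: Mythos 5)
Your proof is correct and is exactly the direct verification the paper intends (the paper states this proposition without proof, treating it as immediate from the definition). Setting the off-diagonal entries of $\m A$ to zero collapses each conditional to $\link(\link^{-1}(m_i))=m_i$, which is all that is needed.
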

By construction, it is straightforward to sample $\v x\sim \q_{\m A}^{\Link}$ and evaluate $\q_{\m A}^{\Link}(\v x)$ point-wise as summarized in Procedure \ref{algo:sampling}. Alternatively, one could sample from an auxiliary distribution $\varphi$ on $\R^d$ which allows to compute $\varphi(x_i\mid \v x_{1:i-1})$ and define a parametric family $\q_{\tau,\varphi}(\v\gamma)=\int_{\tau^{-1}(\v \gamma)} \varphi(\v x) d\v x$ through the mapping $\tau\colon\R^d\to\B^d$. We come back to this idea in Section \ref{sec:Gaussian}.
\floatname{algorithm}{Procedure}
\begin{algorithm}[ht]
\caption{Sampling from a $\mu$-conditionals family}
\begin{algorithmic}
\STATE $\v x=(0,\dots,0),\ p\gets 1$
\FOR {$i=1\dots,d$}\vspace{0.2em}
  \STATE $c\gets \q_{\m A}^{\Link}(x_i=1\mid\v x_{1:i-1})=\link({a_{ii}+\sum_{j=1}^{i-1}a_{ij}x_j})$, $u\gets U\sim\mathcal{U}_{[0,1]}$ \\[.2em]
  \STATE \bf{ if } $u<c$ \bf{ then }$x_i\gets1$ \\
  \STATE $p\gets\begin{cases}
	    p\cdot c     & \textbf{if }\ \ x_i=1 \\
	    p\cdot (1-c) & \textbf{if }\ \ x_i=0
            \end{cases}$ \\
\ENDFOR
\RETURN $\v x,\ p$
\end{algorithmic}
\label{algo:sampling}
\end{algorithm}

\cite{qaqish2003family} discusses the $\mu$-conditionals family with a truncated linear link function $\mu(x)=\min\set{\max\set{x,0},1}$. The linear structure allows to compute the parameters by simple matrix inversion; on the downside, the linear function is truncated and fails to accommodate complicated correlation structures. Therefore, \cite{qaqish2003family} elaborates on conditions that guarantee the linear conditionals family to be valid for special correlation structures.

\cite{farrell2006nonlinear} propose a $\mu$-conditionals family with a logistic link function $\mu(x)=1/[1+\exp(-x)]$. However, they only analyze the special case of autoregressive correlation structure. In Section \ref{sec:log cond family}, we further motivate the use of the logistic link function which indeed allows to model any feasible correlation structure as states the following theorem.

\begin{theorem}
\label{thm:mean+corr}
Let $\link\colon\Rc\to[0,1]$ be a monotonic, differentiable bijection and $\m M$ a $d\times d$ cross-moment matrix. There is a unique $d\times d$ real-valued lower triangular matrix $\m A$ such that
$\sum_{\v\gamma\in\B^d} \q_{\m A}^{\Link}(\v\gamma)\v\gamma\v\gamma\t=\m M$.
\end{theorem}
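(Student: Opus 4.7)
The plan is to prove the theorem by induction on $d$, constructing $\m A$ row by row. The base case $d=1$ is immediate: $a_{11}=\mu^{-1}(m_{11})$ is well-defined and unique since $\mu\colon\Rc\to[0,1]$ is a bijection and $m_{11}\in(0,1)$.

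For the inductive step, assume the first $i-1$ rows have been built so that $q\eqdef\q_{\m A_{1:i-1,1:i-1}}^{\Link}$ realizes the cross-moments $\m M_{1:i-1,1:i-1}$; then $q$ has full support on $\B^{i-1}$, because each conditional probability lies in $(0,1)$ whenever the argument of $\mu$ is finite. Setting $\v x_s\eqdef(s_1,\dots,s_{i-1},1)\t$ for $s\in\B^{i-1}$, $\v a\eqdef(a_{i1},\dots,a_{ii})\t$, and $\v b\eqdef(m_{i1},\dots,m_{i,i-1},m_i)\t$, the $i$ equations matching the $i$-th row of $\m M$ collapse to the nonlinear system
\[
\Phi(\v a)\eqdef\sum_{s\in\B^{i-1}} q(s)\,\v x_s\,\mu(\v x_s\t\v a)=\v b.
\]

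The key point is that $\Phi=\nabla F$ for the smooth convex potential $F(\v a)\eqdef\sum_{s}q(s)\,\Psi(\v x_s\t\v a)$ with $\Psi'=\mu$; its Hessian $\sum_{s}q(s)\mu'(\v x_s\t\v a)\,\v x_s\v x_s\t$ is positive definite, since $\mu'>0$ (differentiable monotonic bijection onto $[0,1]$), $q(s)>0$ for every $s$, and the vectors $\v x_{\v 0}=(0,\dots,0,1)\t$ together with $\v x_{\v e_j}$ for $j=1,\dots,i-1$ form a basis of $\R^{i}$. Strict convexity of $F$ immediately yields uniqueness. For existence I would invoke a Hadamard-style global inversion: as $\|\v a\|\to\infty$ along any unit direction $\v v$, the quantities $\mu(\v x_s\t\v a)$ converge to $\ind_{\v x_s\t\v v>0}$, so $\Phi(\v a)$ approaches the boundary of the zonotope $\mathcal B_q\eqdef\set{\sum_{s}q(s)\v x_s f_s\colon f_s\in[0,1]}$. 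Combined with the local-diffeomorphism property, this makes $\Phi$ a proper local diffeomorphism from $\R^{i}$ onto the simply connected open set $\mathrm{int}(\mathcal B_q)$, hence a global diffeomorphism.

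The main obstacle is then the final step: verifying that the target vector $\v b$ actually lies in $\mathrm{int}(\mathcal B_q)$. The hypotheses on $\m M$ (pairwise Fr\'echet--Hoeffding bounds from Proposition \ref{prop:bin bounds} together with positive-definite covariance) involve only first and second moments, whereas the shape of $\mathcal B_q$ a priori depends on the higher-order moments of $q$, which are not pinned down by $\m M_{1:i-1,1:i-1}$ once $i-1\geq 3$. I would close this gap by checking the claim directly in the low-dimensional case ($i-1\leq 2$), where $\mathcal B_q$ is cut out precisely by the pairwise Fr\'echet--Hoeffding constraints, and by strengthening the inductive hypothesis so that the $q$ produced by the construction retains enough higher-order flexibility to keep every admissible $\v b$ strictly interior.
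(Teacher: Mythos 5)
Your strategy is the same as the paper's --- induction on the dimension, appending one row of $\m A$ at a time by solving the moment equations for the new conditional, with invertibility coming from a positive-definite Jacobian --- and your uniqueness argument is in fact cleaner: viewing $\Phi$ as the gradient of a strictly convex potential gives global injectivity at once, where the paper (Lemma~\ref{lem:bijective}) only checks that the Jacobian determinant is positive. The problem is the existence half, and it sits exactly where you say it does: you need $\v b\in\mathrm{int}(\mathcal B_q)$, and neither of your proposed repairs delivers this. The low-dimensional check already fails at $i-1=2$: writing $q_{s}$ for the probabilities of $q$ on $\B^{2}$, membership of $(m_{i1},m_{i2},m_{ii})$ in $\mathcal B_q$ forces $-q_{00}<m_{i1}+m_{i2}-m_{ii}<q_{11}=m_{12}$, a genuinely trivariate constraint that does \emph{not} follow from the pairwise Fr\'echet--Hoeffding bounds together with positive definiteness of the covariance; for instance $m_{11}=m_{22}=m_{33}=1/2$, $m_{12}=0.15$, $m_{13}=m_{23}=0.35$ satisfies Definition~\ref{def:cross-moment matrix} yet violates $m_{13}+m_{23}-m_{33}\leq m_{12}$, an inequality necessary for realizability by \emph{any} binary distribution. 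And ``strengthening the inductive hypothesis'' is not an available move: once the first $i-1$ rows of $\m A$ are fixed, $q$ is completely determined, so there is no residual freedom in its higher-order moments.

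For comparison, the paper's own proof does not close this gap either. Lemma~\ref{lem:bijective} asserts that $f$ maps a ball bijectively onto a coordinate box $\bigtimes_{i}(\varepsilon_r,m_i^{*}-\varepsilon_r)$, and the theorem's proof then only needs the new column to lie in such a box, which the componentwise bounds supply; but the lemma's proof establishes only nonsingularity of the Jacobian, and the actual range of $f$ is confined to your zonotope, a proper subset of that box as soon as $d\geq2$. The surjectivity statement that carries the paper's existence step is therefore precisely the claim you isolated and could not prove. To finish along these lines one must assume more than Definition~\ref{def:cross-moment matrix} --- at a minimum that $\m M$ is realized by some binary distribution --- and then still show that the corresponding conditional expectations can be matched within the zonotope of the \emph{specific} $q$ produced by the induction, whose higher-order moments differ from those of the realizing distribution. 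Your zonotope formulation is the right language for attacking this, but as written the proposal does not constitute a proof.
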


Besides the logistic function invoked above, popular link functions include the complementrary log-log function with $\mu(x)=1-\exp[-\exp(x)]$ and the probit function with $\mu(x)=(2\pi)^{-1/2}\int_{-\infty}^{x}\exp(-y^{2}/2)dy$ \citep[sec. 4.3]{mccullagh1989generalized}. We derive two auxiliary results to structure the proof of Theorem \ref{thm:mean+corr}.

\begin{lemma}
\label{lem:ext cross-moments}
For a cross-moment matrix $\m M$ with mean vector $\v m=\diag(\m M)$, we have
\begin{equation*}
\begin{pmatrix}
\m M & \v m \\
\v m\t & 1
\end{pmatrix}>0.
\end{equation*}
\end{lemma}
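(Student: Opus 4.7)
The plan is to recognize this as a direct application of the Schur complement criterion for positive definiteness. Recall that a symmetric block matrix
\[
\begin{pmatrix} A & b \\ b\t & c \end{pmatrix}
\]
with $c>0$ is positive definite if and only if the Schur complement $A - c^{-1} b b\t$ is positive definite. So I would first identify $A = \m M$, $b = \v m$, $c = 1$, and compute the Schur complement as $\m M - \v m \v m\t$.

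Next, I would invoke Definition \ref{def:cross-moment matrix}, which specifies that for $\m M$ to qualify as a cross-moment matrix, the matrix $\m M - \diag(\m M)\diag(\m M)\t$ must be positive definite. Since by hypothesis $\v m = \diag(\m M)$, this is precisely the positive definiteness of the Schur complement we just computed. Combined with $c = 1 > 0$, the Schur complement criterion delivers the desired result.

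Alternatively, if I wanted to avoid citing the Schur complement as a black box, I could write the quadratic form explicitly: for any $(\v v\t, t)\t \in \R^{d+1} \setminus \set{\v 0}$,
\[
\begin{pmatrix} \v v\t & t \end{pmatrix}
\begin{pmatrix} \m M & \v m \\ \v m\t & 1 \end{pmatrix}
\begin{pmatrix} \v v \\ t \end{pmatrix}
= \v v\t \m M \v v + 2t\, \v v\t \v m + t^2,
\]
then complete the square by rewriting this as $\v v\t(\m M - \v m \v m\t)\v v + (t + \v v\t \v m)^2$, and conclude using the positive definiteness of $\m M - \v m \v m\t$ together with the non-negativity of the square term (noting carefully that both terms vanish only if $\v v = \v 0$ and $t = 0$).

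There is no real obstacle here: the statement is essentially a rephrasing of Definition \ref{def:cross-moment matrix} in block-matrix language, and the only subtlety is identifying the diagonal of $\m M$ with the mean vector $\v m$, which is immediate from $m_{ii} = \evplus{\pi}{\Gamma_i^2} = \evplus{\pi}{\Gamma_i} = m_i$ for binary variables.
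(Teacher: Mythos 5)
Your proof is correct, but it pivots the block matrix the other way around from the paper. The paper computes $\det\begin{pmatrix}\m M & \v m \\ \v m\t & 1\end{pmatrix}=\det(\m M)\,(1-\v m\t\m M^{-1}\v m)$, i.e.\ it takes the Schur complement with respect to the upper-left block $\m M$, and must then prove the auxiliary inequality $1-\v m\t\m M^{-1}\v m>0$; it does so via the small trick $\v m\t\m M^{-1}\v m-(\v m\t\m M^{-1}\v m)^2=(\m M^{-1}\v m)\t(\m M-\v m\v m\t)\m M^{-1}\v m>0$, and finally invokes positivity of all leading principal minors (Sylvester's criterion). You instead take the Schur complement with respect to the scalar block $1$, which is exactly $\m M-\v m\v m\t$, the covariance matrix whose positive definiteness is part of Definition \ref{def:cross-moment matrix}; nothing further needs to be proved. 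Your route is more direct: it avoids inverting $\m M$ (and the implicit step that $\m M$ itself is positive definite, which the paper's use of $\m M^{-1}$ quietly requires), and your completing-the-square variant, $\v v\t\m M\v v+2t\,\v v\t\v m+t^2=\v v\t(\m M-\v m\v m\t)\v v+(t+\v v\t\v m)^2$, is fully elementary and self-contained, with the degeneracy case handled correctly ($\v v=\v 0$ forced first, then $t=0$). Both arguments are valid; yours is arguably the cleaner one, the paper's has the minor advantage of producing the determinant of the augmented matrix explicitly, which is reused in the Jacobian computation of Lemma \ref{lem:bijective}.
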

\begin{proof}
Note that $\v m\t\m M^{-1}\v m - (\v m\t\m M^{-1}\v m)^2=(\m M^{-1}\v m)\t(\m M - \v m\v m\t)\m M^{-1}\v m>0$ because the covariance matrix $\m M-\v m\v m\t$ is positive definite. Dividing by $\v m\t\m M^{-1}\v m>0$ we obtain $1-\v m\t\m M^{-1}\v m > 0$ which yields
\begin{align*}
\det{
\begin{pmatrix}
\m M & \v m \\
\v m\t & 1
\end{pmatrix}}
&=
\det\left[
\begin{pmatrix}
\m M & \v 0 \\
\v 0\t & 1
\end{pmatrix}
\begin{pmatrix}
\m I & \m M^{-1}\v m \\
\v m\t & 1
\end{pmatrix}\right]
=
\det(\m M)
\det
\begin{pmatrix}
\m I & \m M^{-1}\v m \\
\v 0\t & (1-\v m\t\m M^{-1}\v m)
\end{pmatrix} \\[1ex]
&=\det(\m M)(1-\v m\t\m M^{-1}\v m)>0.
\end{align*}
Therefore, all principal minors are positive.
\end{proof}

\begin{lemma}
\label{lem:bijective}
Let $\link\colon\Rc\to[0,1]$ be a monotonic, differentiable bijection, and denote by $B_r^{n}=\set{\v x\in\R^{n}\mid \v x\t \v x < r^2}$ the open ball with radius $r>0$. Let $\pi$ be a binary distribution with cross-moment matrix $\m M$. We write $\v m=\diag(\m M)$ and $\v m^{*}=(\v m\t,1)\t$ for the mean vector. There is $\varepsilon_r>0$ such that the function
\begin{equation*}
f\colon B_r^{d+1}\to\bigtimes_{i=1}^{d+1}(\varepsilon_r,m_i^*-\varepsilon_r),\quad
f(\v a)=\sum_{\v\gamma\in\B^d}\pi(\v\gamma)\link(a_{d+1}+\textstyle\sum_{k=1}^{d}a_k\gamma_k)
\begin{pmatrix}
\v\gamma \\ 1
\end{pmatrix}
\end{equation*}
is a differentiable bijection.
\end{lemma}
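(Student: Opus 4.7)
The plan is to apply the inverse function theorem at $\v a=\v 0$. Since $f(\v 0)=\link(0)\v m^{*}$, this point lies strictly inside the target product set as soon as $\varepsilon_{r}<\min_{i}\min\set{\link(0)m_{i}^{*},(1-\link(0))m_{i}^{*}}$. Differentiability of $f$ is inherited from that of $\link$ and the finiteness of the sum over $\v\gamma\in\B^{d}$, so the real content lies in checking invertibility of the Jacobian at the origin and in matching the ball domain to the rectangular codomain.

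Writing $\gamma_{i}^{*}=\gamma_{i}$ for $i\leq d$ and $\gamma_{d+1}^{*}=1$, differentiation under the sum gives
\begin{equation*}
\partial_{a_{k}}f_{i}(\v 0)
=\link'(0)\sum_{\v\gamma\in\B^{d}}\pi(\v\gamma)\,\gamma_{k}^{*}\gamma_{i}^{*}
=\link'(0)\,m_{ki}^{*}, \qquad i,k\in\set{1,\dots,d+1},
\end{equation*}
so that $Df(\v 0)=\link'(0)\,\m M^{*}$, where $\m M^{*}$ denotes the $(d+1)\times(d+1)$ extended cross-moment matrix of Lemma~\ref{lem:ext cross-moments}. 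That lemma provides positive-definiteness of $\m M^{*}$, and the assumption that $\link$ is a differentiable monotonic bijection is used through $\link'(0)>0$, which holds for the logistic, probit and complementary log-log links. Hence $Df(\v 0)$ is invertible.

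The inverse function theorem then produces open neighborhoods $U\ni\v 0$ and $V\ni\link(0)\v m^{*}$ on which $f$ restricts to a $C^{1}$-diffeomorphism. I would choose $r>0$ small enough that $B_{r}^{d+1}\subseteq U$, so that $f$ is injective on the whole ball, and then $\varepsilon_{r}>0$ small enough that $\bigtimes_{i=1}^{d+1}(\varepsilon_{r},m_{i}^{*}-\varepsilon_{r})\subseteq V\cap f(B_{r}^{d+1})$. A uniform bound $\abs{a_{d+1}+\sum_{k}a_{k}\gamma_{k}}\leq r\sqrt{d+1}$ on $B_{r}^{d+1}$ forces $f(B_{r}^{d+1})$ to lie inside a rectangular slab bounded strictly away from $\v 0$ and $\v m^{*}$, which makes such a choice of $\varepsilon_{r}$ possible.

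The main obstacle I anticipate is precisely this matching of shapes: the inverse function theorem is an abstract local statement, while Lemma~\ref{lem:bijective} prescribes a ball domain and a product-set codomain with explicit radii. The argument therefore consists of first shrinking $r$ until the ball is contained in the IFT neighborhood of injectivity and then shrinking $\varepsilon_{r}$ until the product set is contained in the image; once this bookkeeping is done, positive-definiteness of $\m M^{*}$ (Lemma~\ref{lem:ext cross-moments}) together with $\link'(0)>0$ provides all the analytic substance.
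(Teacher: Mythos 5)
There is a genuine gap: you prove invertibility of the Jacobian only at $\v a=\v 0$ and then propose to \emph{shrink} $r$ until $B_r^{d+1}$ fits inside the neighbourhood of injectivity supplied by the inverse function theorem. But $r$ is a given datum of the lemma, and the whole point of the statement is that it holds for an \emph{arbitrary} (in particular, large) radius: in the proof of Theorem~\ref{thm:mean+corr} the lemma is invoked with $r=r_\varepsilon$ chosen large enough that the sought parameter vector is guaranteed to lie inside the ball. A purely local argument at the origin cannot deliver that. The missing idea is to control the Jacobian \emph{uniformly over the whole ball}: since $\link$ is strictly monotonic and differentiable and $\overline{B_r^{d+1}}$ is compact, $\eta_r\eqdef\min_{\v a\in B_r^{d+1}}\min_{\v\gamma\in\B^d}\link'(a_{d+1}+\sum_k a_k\gamma_k)>0$, whence
\begin{equation*}
f'(\v a)=\sum_{\v\gamma\in\B^d}\pi(\v\gamma)\,\link'\big(a_{d+1}+\textstyle\sum_{k}a_k\gamma_k\big)
\begin{pmatrix}\v\gamma\v\gamma\t & \v\gamma\\ \v\gamma\t & 1\end{pmatrix}
\succeq \eta_r
\begin{pmatrix}\m M & \v m\\ \v m\t & 1\end{pmatrix}\succ 0
\end{equation*}
for \emph{every} $\v a\in B_r^{d+1}$, by Lemma~\ref{lem:ext cross-moments}. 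Because $f'(\v a)$ is symmetric positive definite throughout the convex domain, $(f(\v a)-f(\v b))\t(\v a-\v b)=\int_0^1(\v a-\v b)\t f'(\v b+t(\v a-\v b))(\v a-\v b)\,dt>0$ for $\v a\neq\v b$, which gives global injectivity on the full ball — something everywhere-nonvanishing Jacobian alone, let alone nonvanishing at one point, does not provide.

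A secondary problem is your treatment of the codomain. You ask simultaneously for $\bigtimes_{i}(\varepsilon_r,m_i^*-\varepsilon_r)\subseteq f(B_r^{d+1})$ and for $f(B_r^{d+1})$ to lie in a slab; for $f$ to be a bijection onto the stated product set you would need exact equality $f(B_r^{d+1})=\bigtimes_i(\varepsilon_r,m_i^*-\varepsilon_r)$, and a diffeomorphic image of a ball is not a box, so no amount of shrinking $\varepsilon_r$ reconciles the two containments. The paper sidesteps this by \emph{defining} $\varepsilon_r$ from the extremal values $\min_{\v a\in B_r^{d+1}}f_i(\v a)$ and $\max_{\v a\in B_r^{d+1}}f_i(\v a)$, so that the box is calibrated to the image of the ball; what actually gets used downstream is only that any target in a suitable box has a unique preimage in a sufficiently large ball. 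Your identification of the Jacobian structure and of the role of Lemma~\ref{lem:ext cross-moments} is correct, but the argument must be run uniformly on $B_r^{d+1}$ for the given $r$, not locally at the origin.
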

\begin{proof}
We set
$\varepsilon_r\eqdef\max\bigcup_{i\in D\cup\set{d+1}}\left\{\min_{\v a\in B_r^{d+1}} f_i(\v a),\ m_i^*-\max_{\v a\in B_r^{d+1}} f_i(\v a)\right\}.$
For $i,j\in D\cup\set{d+1}$, the partial derivatives of $f$ are
\begin{align*}
\frac{\partial f_i}{\partial a_j}=\sum_{\v\gamma\in\B^d}\pi(\v\gamma)\link'(a_{d+1}+\textstyle\sum_{k=1}^{d}a_k\gamma_k)
\times\begin{cases}
\gamma_i\gamma_j & (i,j\in\set{1,\dots,d}) \\
\gamma_i & (j=d+1) \\
\gamma_j & (i=d+1) \\
1        & (i=j=d+1).
\end{cases}
\end{align*}
We have $\eta_r\eqdef\min_{\v a\in B_r^{d+1}}\min_{\v\gamma\in\B^d}
\link'(a_{d+1}+\textstyle\sum_{i=1}^{d}a_i\gamma_i)>0$ since $\link$ is strictly monotonic.
Then the Jacobian is positive for all $\v a\in B_r^d$,
\begin{align*}
\det{f'(\v a)}
=\det\left[\sum_{\v\gamma\in\B^d}\pi(\v\gamma)
\link'(a_{d+1}+\textstyle\sum_{i=1}^{d}a_i\gamma_i)
\begin{pmatrix}
\v\gamma\v\gamma\t & \v\gamma \\
\v\gamma\t & 1
\end{pmatrix}\right]
\geq
\eta_r^{d+1}\,
\det
\begin{pmatrix}
\m M & \v m \\
\v m\t & 1
\end{pmatrix}
>0,
\end{align*}
where we applied Lemma \ref{lem:ext cross-moments} in the last inequality.
\end{proof}

\begin{proof}[Theorem \ref{thm:mean+corr}]
We proceed by induction over $d$. For $d=1$, $\m A(1)$ is a scalar and we define the $\mu$-conditionals family $\q_{\m A(1)}^{\Link}$ via Corollary \ref{cor:mu product}. Suppose that we have already constructed a $\mu$-conditionals family $\q_{\m A(d)}^{\Link}$ with $d\times d$ lower triangular matrix $\m A(d)$ and cross-moment matrix $\m M(d)$. We can add a new dimension to the $\mu$-conditionals model $\q_{\m A(d)}^{\Link}$ without changing $\m M(d)$, since
\begin{align*}
\sum_{\v\xi\in\B^{d+1}} \q_{\m A(d+1)}^{\Link}(\v\xi)\v\xi\v\xi\t
=&\ \sum_{\v\xi\in\B^{d+1}} \q_{\m A(d)}^{\Link}(\v\xi_{1:d})\v\xi\v\xi\t
\left[\link(a_{d+1,d+1}+\textstyle\sum_{j=1}^{d} a_{d+1,j}\xi_j)\right]^{\xi_{d+1}}\times\\ &\qquad
\left[1-\link(a_{d+1,d+1}+\textstyle\sum_{j=1}^{d} a_{d+1,j}\xi_j)\right]^{1-\xi_{d+1}} \\
=&\ \sum_{\v\gamma\in\B^d} \q_{\m A(d)}^{\Link}(\v\gamma)
\Bigg\{
\link(a_{d+1,d+1}+\textstyle\sum_{j=1}^{d} a_{d+1,j}\gamma_j)
\begin{pmatrix}
\v\gamma\v\gamma\t & \v\gamma \\
\v\gamma\t  & 1
\end{pmatrix}+\\ &\qquad
\left[1-\link(a_{d+1,d+1}+\textstyle\sum_{j=1}^{d} a_{d+1,j}\gamma_j)\right]
\begin{pmatrix}
\v\gamma\v\gamma\t & \v0 \\
\v0\t  & 0
\end{pmatrix}
\Bigg\} \\
=&\
\sum_{\v\gamma\in\B^d} \q^{\Link}_{\m A(d)}(\v\gamma)
\link(a_{d+1,d+1}+\textstyle\sum_{j=1}^{d} a_{d+1,j}\gamma_j)
\begin{pmatrix}
\m 0   & \v\gamma \\
\v\gamma\t  & 1
\end{pmatrix}+\\ &\qquad
\begin{pmatrix}
\m M(d) & \v0 \\
\v0 \t  & 0
\end{pmatrix}
\end{align*}
For reasons of symmetry, it suffices to show that there is $\v a\in \R^{d+1}$ such that
\begin{equation*}
f(\v a)=\sum_{\v\gamma\in\B^{d}} \q_{\m A(d)}(\v\gamma)\link(a_{d+1}+\textstyle\sum_{i=1}^{d}a_i\gamma_i)
\begin{pmatrix}
\v\gamma \\ 1
\end{pmatrix}=\m M(d+1)_{\bullet d+1},
\end{equation*}
where the r.h.s. denotes the $(d+1)$th column of the augmented cross-moment matrix. There is $\varepsilon>0$ so that $\m M(d+1)_{\bullet d+1}\in\bigtimes_{i=1}^{d+1}(\varepsilon,m_i^*-\varepsilon)$ with $\v m^*=(\diag[\m M(d)]\t,1)$ which implies that a solution is contained in a sufficiently large open ball $B_{r_{\varepsilon}}^{d+1}$. We apply Lemma \ref{lem:bijective} to complete the inductive step and the proof.
\end{proof}

%


%

\section{The logistic conditionals family}
\label{sec:log cond family}
We denote by $\q_{\m A}^{\LogCo}$ the logistic conditionals family, that is the $\mu$-conditionals family with logistic link function $\logistic(x)\eqdef1/[1+\exp(-x)]$. This parametric family has been proposed by \cite{farrell2006nonlinear}, and in more general terms suggested by \cite{arnold1996distributions}. In this section, we motivate why the logistic link function arises somewhat naturally in the context of $\mu$-conditional families.
\begin{definition}
Let $\m A$ be a $d\times d$ real-valued lower triangular matrix. We refer to
\begin{equation*}
\q_{\m A}^{\ExpQu}(\v\gamma)=\exp(h+\v\gamma\t\m A\v\gamma),
\end{equation*}
as the exponential quadratic family with $h\eqdef-\log[\sum_{\v x\in\B^d}\exp(\v x\t\m A\v x)]$.
\end{definition}
\begin{proposition}
If $\m A=\diag(\v a)$, then $a_{ii}=\logistic^{-1}(m_{ii})$ and $\q_{\m A}^{\ExpQu}=\q_{\m A}^{\LogCo}=\q_{\v m}^{\Prod}$.
\end{proposition}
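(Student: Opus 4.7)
The plan is to unpack both definitions and observe that diagonality of $\m A$ collapses the quadratic form to a linear one on $\B^d$. Specifically, for $\v\gamma\in\B^d$ we have $\gamma_i^2=\gamma_i$, so $\v\gamma\t\m A\v\gamma=\sum_{i=1}^{d}a_{ii}\gamma_i$, and the exponential quadratic mass function factorizes:
\begin{equation*}
\q_{\m A}^{\ExpQu}(\v\gamma)
= \exp(h)\,\prod_{i=1}^{d}\exp(a_{ii}\gamma_i),
\qquad
\sum_{\v x\in\B^{d}}\exp(\v x\t\m A\v x)=\prod_{i=1}^{d}(1+e^{a_{ii}}).
\end{equation*}
Dividing through, each factor becomes $e^{a_{ii}\gamma_i}/(1+e^{a_{ii}})=\logistic(a_{ii})^{\gamma_i}[1-\logistic(a_{ii})]^{1-\gamma_i}$, using $\logistic(x)=e^{x}/(1+e^{x})$.

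Next I would handle the $\logistic$-conditionals side, which is even quicker: for a diagonal $\m A$, the argument $a_{ii}+\sum_{j<i}a_{ij}\gamma_j$ reduces to $a_{ii}$, so by the defining product form
\begin{equation*}
\q_{\m A}^{\LogCo}(\v\gamma)=\prod_{i=1}^{d}\logistic(a_{ii})^{\gamma_i}\left[1-\logistic(a_{ii})\right]^{1-\gamma_i},
\end{equation*}
which literally equals the expression derived above for $\q_{\m A}^{\ExpQu}$. Setting $m_i\eqdef\logistic(a_{ii})$ then identifies this with the product family $\q_{\v m}^{\Prod}$ from its definition.

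Finally I would read off the relation $a_{ii}=\logistic^{-1}(m_{ii})$: since the $i$th marginal is Bernoulli$(\logistic(a_{ii}))$, the diagonal entry of the cross-moment matrix satisfies $m_{ii}=\evplus{\q_{\m A}^{\ExpQu}}{\Gamma_i^2}=\evplus{\q_{\m A}^{\ExpQu}}{\Gamma_i}=\logistic(a_{ii})$, and $\logistic$ is invertible on $\R$. There is no real obstacle here; the statement is essentially a sanity check that in the diagonal case the two parametrizations degenerate consistently to independent Bernoullis, and the proof is a two-line computation once $\gamma_i^2=\gamma_i$ is used to linearize the quadratic exponent.
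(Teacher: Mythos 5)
Your proof is correct; the paper states this proposition without proof, treating it as immediate, and your computation (using $\gamma_i^2=\gamma_i$ to linearize the quadratic form, factorizing the normalizing constant as $\prod_i(1+e^{a_{ii}})$, and noting that the conditional arguments collapse to $a_{ii}$) is exactly the argument the paper implicitly relies on. Nothing is missing.
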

The exponential quadratic family is a natural way to design a parametric family and plays a central role in physics and life science being the well-studied Ising model on a weighted complete graph. It links to information theory \citep{soofi1994capturing}, log-linear theory for contingency tables \citep[ch. 5]{bishop75discrete} and graphical models \citep[ch. 2]{cox1996multivariate}. Finding its mode is an NP-hard problem and intensively studied in the field of operation research \citep[for a recent review]{boros2007local}.

Proposition \ref{prop:maxent} states that the exponential quadratic family is the maximum entropy distribution on $\B^d$ having a given cross-moment matrix. It appears to be the binary analogue of the multivariate normal distribution which is the maximum entropy distribution on $\R^{d}$ having a given covariance matrix \citep[sec. 5.1.1]{kapur1989maximum}. We can read the parameters $a_{ij}$ as Lagrange multipliers or, if $i\neq j$, as conditional log odd-ratios since
\begin{equation*}
a_{ij}=\log\left[\frac
{\probplus{\q_{\m A}^{\ExpQu}}{\Gamma_i=1,\Gamma_j=1 \mid \v \Gamma_{-i,j}}\probplus{\q_{\m A}^{\ExpQu}}{\Gamma_i=0,\Gamma_j=0 \mid \v \Gamma_{-i,j}}}
{\probplus{\q_{\m A}^{\ExpQu}}{\Gamma_i=0,\Gamma_j=1 \mid \v \Gamma_{-i,j}}\probplus{\q_{\m A}^{\ExpQu}}{\Gamma_i=1,\Gamma_j=0 \mid \v \Gamma_{-i,j}}}\right].
\end{equation*}
We might interpret the constant conditional log odd-ratios as analogue of the constant conditional correlations of the multivariate normal distribution \citep{wermuth1976analogies}.

Despite these similarities to the multivariate normal distribution, we cannot easily sample from the exponential quadratic family nor explicitly relate the parameter $\m A$ to the cross-moment matrix $\m M$. The reason is that the lower dimensional marginal distributions are difficult to compute \citep[(iii)]{cox1972analysis}.

\begin{proposition}
The marginal distribution of the exponential quadratic family is
\begin{equation}
\textstyle
\q_{\m A}^{\ExpQu}(\v\gamma_{-d})=
\exp\left(h+\v\gamma_{-d}\t\m A_{-d}\v\gamma_{-d}+\log\left[1+\exp(a_{dd}+\sum_{j=1}^{d-1}a_{ij}\gamma_j\right]\right).
\end{equation}
\end{proposition}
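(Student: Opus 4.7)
The plan is a direct marginalization: sum the joint mass function over $\gamma_d\in\B$ and isolate the dependence on $\gamma_d$ inside the quadratic form.

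First I would decompose the quadratic form. Since $\m A$ is lower triangular, we may write
\begin{equation*}
\v\gamma\t\m A\v\gamma = \sum_{i=1}^{d}\sum_{j=1}^{i} a_{ij}\gamma_i\gamma_j,
\end{equation*}
and using $\gamma_i^2=\gamma_i$ on $\B$, the terms involving $\gamma_d$ collect as $\gamma_d\bigl(a_{dd}+\sum_{j=1}^{d-1}a_{dj}\gamma_j\bigr)$. The remaining terms depend only on $\v\gamma_{-d}$ and equal $\v\gamma_{-d}\t\m A_{-d}\v\gamma_{-d}$, where $\m A_{-d}$ is the upper-left $(d-1)\times(d-1)$ block of $\m A$.

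Next I would marginalize. Writing $s(\v\gamma_{-d})\eqdef a_{dd}+\sum_{j=1}^{d-1}a_{dj}\gamma_j$, we have
\begin{equation*}
\q_{\m A}^{\ExpQu}(\v\gamma_{-d}) = \sum_{\gamma_d\in\B}\exp\!\bigl(h+\v\gamma_{-d}\t\m A_{-d}\v\gamma_{-d}+\gamma_d\, s(\v\gamma_{-d})\bigr) = \exp\!\bigl(h+\v\gamma_{-d}\t\m A_{-d}\v\gamma_{-d}\bigr)\bigl[1+\exp(s(\v\gamma_{-d}))\bigr].
\end{equation*}
Pulling the bracketed factor back into the exponential via $\log$ gives precisely the stated expression.

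No step is really an obstacle; the only thing to be careful with is the lower-triangular convention for $\m A$, which ensures that no symmetrization constants appear when splitting the quadratic form, and the consistent identification of the cross-terms $a_{dj}\gamma_d\gamma_j$ rather than $a_{jd}\gamma_j\gamma_d$ (the latter vanish since $j<d$ implies $a_{jd}=0$). The normalizing constant $h$ is untouched by the marginalization, which is consistent with the convention adopted in the definition of the family (the marginal is an unnormalized re-expression in the same exponential form, not a renormalization).
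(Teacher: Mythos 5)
Your proof is correct and is essentially the argument the paper leaves implicit: the proposition is stated without proof, and the same decomposition of $\v\gamma\t\m A\v\gamma$ followed by summation over $\gamma_d\in\B$ is exactly what the paper relies on later when it rewrites this marginal using $\log[1+\exp(x)]=\tfrac{1}{2}x+\log[2\cosh(\tfrac{1}{2}x)]$. Note that your careful tracking of the cross-terms also silently corrects a typo in the stated formula, where the coefficient inside the sum should read $a_{dj}$ rather than $a_{ij}$.
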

We cannot repeat the marginalization since the multi-linear structure is lost. In fact, the following result shows that the logistic conditionals family is precisely constructed such that the non-linear term in the above expression vanishes.
\begin{proposition}
Let $\m A$ be a $d\times d$ lower triangular matrix. The logistic conditionals family can be written as
\begin{align*}
\q_{\m A}^{\LogCo}(\v\gamma)
&=\textstyle
\exp\left(\v\gamma\t\m A\v\gamma-\sum_{i=1}^d \log\left[1+\exp(a_{ii}+\sum_{j=1}^{i-1} a_{ij}\gamma_j)\right]\right).
\end{align*}
\end{proposition}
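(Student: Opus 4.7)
The plan is a direct verification: rewrite each Bernoulli-type factor in a unified exponential form, take the product, and then recognise the resulting linear-in-$\v\gamma$ quantity inside the exponential as the quadratic form $\v\gamma\t\m A\v\gamma$, exploiting both the lower triangularity of $\m A$ and the identity $\gamma_i^2=\gamma_i$ for $\gamma_i\in\B$.

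First, set $\eta_i\eqdef a_{ii}+\sum_{j=1}^{i-1}a_{ij}\gamma_j$. From $\logistic(\eta)=e^{\eta}/(1+e^{\eta})$ and $1-\logistic(\eta)=1/(1+e^{\eta})$, I would observe that for both values $\gamma\in\B$ one has the unified identity
\begin{equation*}
\logistic(\eta)^{\gamma}\bigl[1-\logistic(\eta)\bigr]^{1-\gamma}=\frac{e^{\gamma\eta}}{1+e^{\eta}}.
\end{equation*}
Taking the product over $i=1,\dots,d$ in the definition of $\q_{\m A}^{\LogCo}$ then yields
\begin{equation*}
\q_{\m A}^{\LogCo}(\v\gamma)=\exp\!\left(\sum_{i=1}^{d}\gamma_i\eta_i-\sum_{i=1}^{d}\log\!\bigl[1+\exp(\eta_i)\bigr]\right).
\end{equation*}

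Second, I would match the first sum to the quadratic form. Expanding,
\begin{equation*}
\sum_{i=1}^{d}\gamma_i\eta_i=\sum_{i=1}^{d}a_{ii}\gamma_i+\sum_{i=1}^{d}\sum_{j=1}^{i-1}a_{ij}\gamma_i\gamma_j.
\end{equation*}
Since $\gamma_i\in\B$ implies $\gamma_i=\gamma_i^2$, the diagonal contribution equals $\sum_i a_{ii}\gamma_i^2$; and since $\m A$ is lower triangular, $a_{ij}=0$ whenever $j>i$, so the off-diagonal part is exactly $\sum_{i>j}a_{ij}\gamma_i\gamma_j$. Together, these add up to $\v\gamma\t\m A\v\gamma$, giving the claimed representation.

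There is no real obstacle; the argument is a two-line manipulation once the unified form of the Bernoulli factor is written down. The only subtle points worth flagging in the write-up are that (i) the rearrangement of the linear term as a quadratic form relies crucially on the binary assumption $\gamma_i^{2}=\gamma_i$, and (ii) the absence of an upper-triangular part of $\m A$ is essential; without lower triangularity, the identity would have to be stated in terms of the symmetrised matrix.
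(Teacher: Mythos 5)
Your proof is correct and follows essentially the same route as the paper's: both reduce each conditional Bernoulli factor to the form $\exp(\gamma_i\eta_i)/[1+\exp(\eta_i)]$ (the paper does this via $\log[1-\logistic(x)]=-\log[1+\exp(x)]$ after taking logarithms) and then identify $\sum_i\gamma_i\eta_i$ with $\v\gamma\t\m A\v\gamma$ using lower triangularity and $\gamma_i^2=\gamma_i$. Your explicit flagging of the latter identity, which the paper uses silently, is a welcome touch.
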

\begin{proof}
Straightforward calculations yield
\begin{align*}
\log \q_{\m A}^{\LogCo}(\v\gamma)
&\textstyle=
\sum_{i=1}^d \log\left([\logistic(a_{ii}+\sum_{j=1}^{i-1}a_{ij}\gamma_j)]^{\gamma_i}[1-\logistic(a_{ii}+\sum_{j=1}^{i-1}a_{ij}\gamma_j)]^{1-\gamma_i}\right) \\
&\textstyle=
\sum_{i=1}^d\left(\gamma_i\log[\logistic(a_{ii}+\sum_{j=1}^{i-1}a_{ij}\gamma_j)]+
(1-\gamma_i)\log[1-\logistic(a_{ii}+\sum_{j=1}^{i-1}a_{ij}\gamma_j)]\right) \\
&\textstyle=
\sum_{i=1}^d\left(\gamma_i\,\logistic^{-1}[\logistic(a_{ii}+\sum_{j=1}^{i-1}a_{ij}\gamma_j)]+
\log[1-\logistic(a_{ii}+\sum_{j=1}^{i-1}a_{ij}\gamma_j)]\right) \\
&\textstyle=
\sum_{i=1}^d\left(\gamma_i(a_{ii}+\sum_{j=1}^{i-1}a_{ij}\gamma_j)-
\log[1+\exp(a_{ii}+\sum_{j=1}^{i-1}a_{ij}\gamma_j)]\right) \\
&\textstyle=
\sum_{i=1}^d\sum_{j=1}^i a_{ij}\gamma_i\gamma_j-\sum_{i=1}^d \log[1+\exp(a_{ii}+\sum_{j=1}^{i-1} a_{ij}\gamma_j)] \\
&\textstyle=
\v\gamma\t\m A\v\gamma-\sum_{i=1}^d \log[1+\exp(a_{ii}+\sum_{j=1}^{i-1} a_{ij}\gamma_j)],
\end{align*}
where we used $\log[1-\logistic(x)]=-\log[1+\exp(x)]$ in the third line.
\end{proof}
The full conditional probability of the $d$-dimensional exponential quadratic family is a logistic regression term.
\begin{proposition}
The conditional distribution of the exponential quadratic family is
\begin{equation*}
\textstyle
\q_{\m A}^{\ExpQu}(\v\gamma_i=1\mid\v\gamma_{-i})=\logistic(a_{ii}+\sum_{j=1}^{i-1}a_{ij}\gamma_j+\sum_{j=i+1}^{d}a_{ji}\gamma_j).
\end{equation*}
\end{proposition}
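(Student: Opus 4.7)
The plan is to compute the ratio $\q_{\m A}^{\ExpQu}(\gamma_i=1,\v\gamma_{-i})/\q_{\m A}^{\ExpQu}(\gamma_i=0,\v\gamma_{-i})$ directly from the exponential-quadratic form and recognize that it has the shape $\exp(z)$, from which the conditional probability equals $\exp(z)/(1+\exp(z))=\logistic(z)$.

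First I would expand the quadratic form using the lower-triangular structure of $\m A$:
\begin{equation*}
\v\gamma\t\m A\v\gamma=\sum_{k=1}^{d}\sum_{l=1}^{k}a_{kl}\gamma_k\gamma_l,
\end{equation*}
and then isolate the contribution of $\gamma_i$. Using $\gamma_i^{2}=\gamma_i$ (binary entries), the diagonal term $k=l=i$ contributes $a_{ii}\gamma_i$; the off-diagonal terms touching $\gamma_i$ split into (a) those with $k=i$, $l<i$, contributing $\gamma_i\sum_{j=1}^{i-1}a_{ij}\gamma_j$, and (b) those with $l=i$, $k>i$, contributing $\gamma_i\sum_{j=i+1}^{d}a_{ji}\gamma_j$. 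All remaining summands in $\v\gamma\t\m A\v\gamma$ do not depend on $\gamma_i$, and $h$ is a constant independent of $\v\gamma$.

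Setting $z\eqdef a_{ii}+\sum_{j=1}^{i-1}a_{ij}\gamma_j+\sum_{j=i+1}^{d}a_{ji}\gamma_j$, the quadratic form therefore factors as $\v\gamma\t\m A\v\gamma=\gamma_i z+R(\v\gamma_{-i})$, where $R$ collects the terms not depending on $\gamma_i$. Substituting into the definition of $\q_{\m A}^{\ExpQu}$ and cancelling the common factor $\exp(h+R(\v\gamma_{-i}))$ in numerator and denominator yields
\begin{equation*}
\q_{\m A}^{\ExpQu}(\gamma_i=1\mid\v\gamma_{-i})=\frac{\exp(z)}{\exp(0)+\exp(z)}=\frac{1}{1+\exp(-z)}=\logistic(z),
\end{equation*}
which is the claimed identity.

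There is no real obstacle here: the only point demanding care is the bookkeeping of which entries of the lower-triangular matrix $\m A$ carry $\gamma_i$ as their row index versus their column index, so that the asymmetric-looking sum $\sum_{j<i}a_{ij}\gamma_j+\sum_{j>i}a_{ji}\gamma_j$ is correctly recovered rather than a symmetric sum over $j\neq i$.
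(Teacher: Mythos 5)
Your proof is correct; the paper states this proposition without proof, and your argument — isolating the coefficient of $\gamma_i$ in $\v\gamma\t\m A\v\gamma$ using $\gamma_i^2=\gamma_i$ and the lower-triangular structure, then cancelling the common factor in the ratio — is exactly the intended computation. The bookkeeping of the sums $\sum_{j<i}a_{ij}\gamma_j$ and $\sum_{j>i}a_{ji}\gamma_j$ is handled correctly.
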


Since we cannot repeat the marginalization for lower dimensions, we cannot assess the lower dimensional conditional probabilities which are necessary for sampling. We can, however, derive a series of approximate marginal probabilities that produce a logistic conditionals family which is, for low correlations, close to the original exponential quadratic family. This idea goes back to \citet{cox1994note}.

\begin{proposition}
Let $c_1+c_2x+c_3x^{2}\approx\log[\cosh(x)]$ be a second order approximation. We may approximate the marginal distribution $\q_{\m A}^{\ExpQu}(\v\gamma_{-d})$ by an exponential quadratic family $\exp(h_{*}+\v\gamma_{-d}\t\m A_{*}\v\gamma_{-d})$ with parameters
\begin{equation*}
h_*\eqdef\textstyle h+\log(2)+c_1+\frac{1}{2}a_{dd},\quad \m A_*\eqdef\textstyle\m A_{-d}+(c_2+\frac{1}{2})\diag(\v a_{*})+c_3\,\v a_{*}\v a_{*}\t,
\end{equation*}
where $\v a_{*}\eqdef(a_{d1},\dots,a_{d\,d-1})\t$ denotes the $d$th column of $\m A$ without $a_{dd}$.
\end{proposition}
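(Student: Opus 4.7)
The plan is to substitute a second-order polynomial approximation into the nonlinear term $\log[1+\exp(y)]$ appearing in the exact marginal of the preceding proposition, where $y\eqdef a_{dd}+\v\gamma_{-d}\t\v a_{*}$, and then to reorganise the result as a new exponential quadratic form in $\v\gamma_{-d}$. The first key step is the elementary identity $\log(1+e^{y})=\log 2+y/2+\log[\cosh(y/2)]$, which cleanly separates a piece already linear in $y$ from an even, nonlinear remainder. Substituting $y=a_{dd}+\v\gamma_{-d}\t\v a_{*}$ into the linear piece immediately produces the constant $a_{dd}/2$ inside $h_{*}$ and contributes the $1/2$ that sits inside the diagonal coefficient $(c_2+1/2)$ multiplying $\diag(\v a_{*})$.

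The second step is to substitute the hypothesised approximation $\log[\cosh(x)]\approx c_1+c_2 x+c_3 x^{2}$ into the remainder $\log[\cosh(y/2)]$. Writing the resulting approximation as $c_1+c_2\,z+c_3\,z^{2}$ with $z\eqdef\v\gamma_{-d}\t\v a_{*}$ (where the dependence of the expansion on the intercept $a_{dd}/2$ is absorbed into the coefficients), one expands the square and uses the crucial binary identity $\gamma_j^{2}=\gamma_j$. Concretely, $(\v\gamma_{-d}\t\v a_{*})^{2}=\v\gamma_{-d}\t(\v a_{*}\v a_{*}\t)\v\gamma_{-d}$, and its diagonal $\sum_{j}a_{dj}^{2}\gamma_j^{2}=\sum_{j}a_{dj}^{2}\gamma_j$ is in fact linear in $\v\gamma_{-d}$. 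Every linear-in-$\gamma$ contribution---coming from the $y/2$ piece, from the linear term of the approximation, and from the diagonal of the rank-one matrix---is then collected into a single diagonal update $(c_2+1/2)\diag(\v a_{*})$, while the genuine off-diagonal products $\gamma_j\gamma_k$ for $j\neq k$ account for exactly the rank-one correction $c_3\,\v a_{*}\v a_{*}\t$.

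The final step is simply to read off $h_{*}$ by gathering every purely constant contribution ($h$, $\log 2$, $c_1$, and $a_{dd}/2$) and to identify $\m A_{*}=\m A_{-d}+(c_2+1/2)\diag(\v a_{*})+c_3\,\v a_{*}\v a_{*}\t$ from the linear-in-$\v\gamma_{-d}$ and quadratic coefficients. Conceptually the argument is just a direct expansion, and there is no analytic obstacle; the main difficulty, and where the proof needs to be written out carefully, is the bookkeeping of the various factors of $1/2$ arising from the $y/2$ shift and from the repeated use of $\gamma_j^{2}=\gamma_j$, so that each contribution is matched to the correct diagonal or off-diagonal entry and the stated coefficients emerge unambiguously.
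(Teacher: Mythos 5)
Your proof is correct and follows essentially the same route as the paper: the identity $\log(1+e^{x})=\tfrac{1}{2}x+\log[2\cosh(\tfrac{1}{2}x)]$, substitution of the quadratic approximation for the $\log\cosh$ remainder (with the intercept $\tfrac{1}{2}a_{dd}$ and the factor $\tfrac{1}{2}$ absorbed into the coefficients), and the binary identity $\gamma_j^{2}=\gamma_j$ to fold the linear terms into $\diag(\v a_{*})$. One sentence of your bookkeeping is slightly misstated---the diagonal of the rank-one term $c_3\,\v a_{*}\v a_{*}\t$ is retained inside that matrix rather than being absorbed into $(c_2+\tfrac{1}{2})\diag(\v a_{*})$---but this does not affect the final formula.
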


\begin{proof}
We write the marginal distribution of the exponential quadratic family as
\begin{align*}
\q_{\m A}^{\ExpQu}(\v\gamma_{-d})
=\textstyle\exp\Big[h+
\v\gamma_{-d}\t\m A_{-d}\v\gamma_{-d}+\frac{1}{2}(a_{dd}+\v a_*\t\gamma_{-d})
\textstyle+\log\left(2\cosh\left[\frac{1}{2}(a_{dd}+\v a_*\t\gamma_{-d})\right]\right)\Big].
\end{align*}
using the identity
\begin{equation*}
\log[1+\exp(x)]
=\textstyle\log\big(\exp(\frac{1}{2}x)\big[\exp(-\frac{1}{2}x)+\exp(\frac{1}{2}x)\big]\big)
=\textstyle\frac{1}{2}x+\log\big[2\cosh(\frac{1}{2}x)\big]
\end{equation*}
and approximate the non-quadratic term by the second order polynomial
\begin{equation*}
\textstyle
\log[\cosh(\frac{1}{2}a_{dd}+\frac{1}{2}\v a_{*}\t\v\gamma_{-d})]\approx
c_1+c_2 \v a_{*}\t\v\gamma_{-d}+c_3 (\v a_{*}\t\v\gamma_{-d})^2.
\end{equation*}
We rewrite the inner products $\v a_{*}\t\v\gamma_{-d}+(\v a_{*}\v\gamma_{-d})^2 =\v\gamma_{-d}\t\left[\diag(\v a_{*})+\v a_{*}\v a_{*}\t\right]\v\gamma_{-d}$ and rearrange the quadratic terms.
\end{proof}

We can iterate the procedure to construct a logistic conditionals family which is close to the original exponential quadratic family. However, the function $\log[\cosh(x)]$ behaves like a quadratic function around zero and like the absolute value function for large $\vert x\vert$. Thus, a quadratic polynomial can only approximate $\log[\cosh(x)]$ well for small values of $x$ which means that exponential quadratic families with strong dependencies is hard to approximate. \citet{cox1994note} propose a Taylor approximation which fits well around $\frac{1}{2}a_{dd}$ and works for weak correlations. The parameters are $\textstyle\v c=\big(\log[\cosh(\frac{1}{2}a_{dd})]),\frac{1}{2}\tanh(\frac{1}{2}a_{dd}),\frac{1}{8}\,\mathrm{sech}^2(\frac{1}{2}a_{dd})\big)$.


%


%

\section{Sampling binary data with specified cross-moment matrix}
\label{sec:sampling}
If $2^d-1$ full probabilities are known, we easily sample from the corresponding multinomial distribution \citep{walker1977efficient}. For a valid set of cross-moments $m_I$, $I\in\mathcal I$, \citet{gange1995generating} proposes to compute the full probabilities using a variant of the Iterative Proportional Fitting algorithm \citep{haberman1972algorithm}. While there are no restrictions on the range of dependencies, we have to enumerate the entire state space which limits this versatile approach to low dimensions.

In the sequel, we do not consider methods for structured correlations nor approaches which require enumeration of the state space. First, we show how to compute the parameter $\m A$ of a $\mu$-conditionals model for a given cross-moment matrix $\m M$. Secondly, we review an alternative approach to sampling binary data based on the multivariate normal distribution \citep{emrich1991method}.

\subsection{Fitting the conditionals family}
The proof of Theorem \ref{thm:mean+corr} suggests an iterative procedure to adjust the parameter $\m A$ to a given cross-moment matrix $\m M$. We add new cross-moments $\v m\in(0,1)^{d+1}$ to the $d\times d$ a lower triangular matrix $\m A$ by solving the non-linear equation $f(\v a)=\v m$ via Newton-Raphson iterations $\v a^{(k+1)}=\v a^{(k)}-[f'(\v a^{(k)})]^{-1}[f(\v a^{(k)})-\v m]$ where
\begin{align*}
f(\v a) &=\textstyle\sum_{\v\gamma\in\B^{d}} \q_{\m A}^{\Link}(\v\gamma)\link [(\v\gamma\t,1)\v a](\v\gamma\t,1)\t \\
f'(\v a)&=\textstyle\sum_{\v\gamma\in\B^{d}} \q_{\m A}^{\Link}(\v\gamma)\link'[(\v\gamma\t,1)\v a](\v\gamma\t,1)\t(\v\gamma\t,1)
\end{align*}
For dimensions $d>10$, the exact computation of the expectations becomes expensive, and we replace $f$ and $f'$ by their Monte Carlo estimates
\begin{equation}
\label{eq:MC estimates}
\begin{aligned}
\hat f(\v a) &=\textstyle\sum_{k=1}^n \q_{\m A}^{\Link}(\v\gamma)\link [(\v x_k\t,1)\v a)](\v x_k\t,1) \\
\hat f'(\v a)&=\textstyle\sum_{k=1}^n \q_{\m A}^{\Link}(\v\gamma)\link'[(\v x_k\t,1)\v a)](\v x_k\t,1)\t(\v x_k\t,1)
\end{aligned}
\end{equation}
where $\v x_1,\dots,\v x_n$ are drawn from $\q^{\Link}_{\m A}$. Some remarks are in order.
\begin{itemize}
\item If the smallest eigenvalue of $\m M-\diag(\m M)\diag(\m M)\t$ approaches zero or a cross-moment $m_{ij}$ approaches the bounds \eqref{eq:bin bounds}, the parameter $a_{ij}$ may become very large in absolute value. The limited numerical accuracy available on a computer inhibits sampling from such extreme cases.
\item We might encounter numerical trouble in the course of the fitting procedure. In order to circumvent problems, we set
\begin{equation*}
m_{ij}(\lambda_{k})\eqdef\lambda_{k} m_{ij}+(1-\lambda_{k})m_{ii}m_{jj},\quad 0=\lambda_1<\dots<\lambda_n=1
\end{equation*}
for all $j=1,\dots,i-1$ and compute a sequence of solutions $\v a(\lambda_k)$ to the cross-moments $\v m(\lambda_k)$. We stop if the parameters fail to converge which ensures that the mean of the $\mu$-conditionals family is always $\diag(\m M)$.
\item If we have data available instead of cross-moments, we would rather fit the family via component-wise likelihood maximization which is usually faster than the method of moments and can even be parallelized \citep{schaefer2011sequential}.
\item For the linear link function $\mu(x)=x$, we obtain
\begin{align*}
f(\v a)
=\textstyle\left[\sum_{\v\gamma\in\B^{d}} \q_{\m A}^{\Link}(\v\gamma)(\v\gamma\t,1)\t(\v\gamma\t,1)\right]\v a
=\begin{pmatrix}
\m M   & \v m \\
\v m\t & 1
\end{pmatrix}\v a
\end{align*}
which always has a solution by virtue of Lemma \ref{lem:ext cross-moments}; to construct a mass function, however, we have to fall back to the truncated version $\mu(x)=\min\set{\max\set{x,0},1}$, and the range of feasible cross-moments is hard to assess \citep{qaqish2003family}.
\end{itemize}

\subsection{Fitting the Gaussian copula family}
\label{sec:Gaussian}
\cite{emrich1991method} propose to dichotomize a multivariate Gaussian distribution for sampling multivariate binary data.
\begin{definition}
For a vector $\v a\in \R^d$ and a $d\times d$ correlation matrix $\g \Sigma$ we define the Gaussian copula family
\begin{equation*}
\label{eq:Gaussian copula}
\textstyle
\q_{\v a,\g \Sigma}^{\Gau}(\v\gamma)=\int_{\tau_{\v a}^{-1}(\v\gamma)} \varphi_{\g \Sigma}(\v x)\,d\v x,\quad
\varphi_{\g \Sigma}(\v x)=(2\pi)^{-d/2}\abs{\g \Sigma}^{-1/2}\exp\left(-\textstyle\frac{1}{2}\,\v x\t\g \Sigma^{-1}\v x\right),
\end{equation*}
where $\tau_{\v a}(\v x)\eqdef\left(\ind_{(-\infty,a_1]}(x_1),\dots,\ind_{(-\infty,a_d]}(x_d)\right)$.
\end{definition}
For all $I\subseteq D$, the marginals are
\begin{align*}
m_I
&=\textstyle
\sum_{\v\gamma\in\B^d} \q_{\v a,\g \Sigma}^{\Gau}(\v\gamma)\prod_{i\in I}\gamma_i
 =\sum_{\v\gamma\in\B^d,\,\v\gamma_I=\v 1} \int_{\tau_{\v a}^{-1}(\v\gamma)} \varphi_{\g \Sigma}(\v v)\,d\v v \\
&=\textstyle
\int_{\hspace{-2ex}\bigcup\limits_{\v\gamma\in\B^d,\,\v\gamma_I=\v 1}\hspace{-2ex}\set{\tau_{\v a}^{-1}(\v\gamma)}}\varphi_{\g \Sigma}(\v v)\,d\v v
=\int_{\times_{i=1}^{d}
\scriptscriptstyle
\begin{cases}\\[-1.9em]
\scriptscriptstyle (-\infty,a_i]    & \scriptstyle \hspace{-.7em} i\in I \\[-.5em]
\scriptscriptstyle (-\infty,\infty) & \scriptstyle \hspace{-.7em} i\notin I
\end{cases}}\, \varphi_{\g \Sigma}(\v v)\,d\v v=\Phi_{\g \Sigma}^{(I)}(\v a_I),
\end{align*}
where $\Phi_{\g \Sigma}^{(I)}$ is the marginal cumulative distribution function of the multivariate Gaussian. We set $a_i=\Phi^{-1}(m_i)$ for $i\in D$ to adjust the mean. In order to compute the parameter $\g \Sigma$ that yields the desired cross-moments $\m M$, we may use a fast series approximations \citep{drezner_98} to solve $m_{ij}=\Phi_{\sigma_{ij}}(a_i,a_j)$ for $\sigma_{ij}$ via Newton-Raphson iterations $\sigma_{ij}^{r+1}=\sigma_{ij}^{r}-[\Phi_{\sigma_{ij}^r}(a_i,a_j)-m_{ij}]/\varphi_{\sigma_{ij}^r}(a_i,a_j)$; \cite{modarres2011high} suggests the bivariate \cite{plackett1965class} distribution as a proxy for $\varphi_{\sigma_{ij}}$ which might provide a good starting value $\sigma_{ij}^{0}\in(-1,1)$.

While we always obtain a solution in the bivariate case, it is well-known that the resulting matrix $\g \Sigma$ is not necessarily positive definite due to the range of the Gaussian copula which allows to attain the bounds \eqref{eq:bin bounds} for $d\leq2$, but not for higher dimensions. In that case, we can replace $\g \Sigma$ by 
\begin{equation}
\label{eq:lower Sigma}
\g \Sigma^*=(\g \Sigma+\abs{\lambda}\m I)/(1+\abs{\lambda})>0
\end{equation}
where $\lambda$ is smaller than any eigenvalue of $\g \Sigma$. Alternatively, we can project $\g \Sigma$ into the set of correlation matrices; see \citet{higham_02} and follow-up papers for algorithms that compute the nearest correlation matrix in Frobenius norm.

The point-wise evaluation of $\q_{\v a,\g \Sigma}^{\Gau}(\v\gamma)$ requires the computation of multivariate normal probabilities, that is high-dimensional integrals with the respect to the density of the multivariate normal distribution. This is a computationally challenging task in itself \citep[see e.g.][]{genz2009computation}, and the Gaussian copula family is therefore not easily incorporated into the Markov chain Monte Carlo algorithms briefly discussed in the introduction.

\section{Numerical experiments}
\label{sec:numerical experiments}
In this section, we compare the $\mu$-conditionals family with truncated linear and logistic link function to the Gaussian copula family. We draw random cross-moment matrices of varying dimension and difficulty, fit the parametric families and record how well the desired correlation structure can be reproduced on average.

\subsection{Random cross-moments}
\label{sec:random matrices}
We first sample the mean $\v m=\diag(\m M)\sim\uni_{(0,1)^d}$. For the off-diagonal elements, we have to ensure that the covariance matrix $\m M-\v m\v m\t$ is positive definite and that the constraints \eqref{eq:bin bounds} are all met. We alternate the following two steps.

\begin{itemize}
\item Permutations $m_{ij}=m_{\sigma(i)\sigma(j)}$ for $i,j\in D$ with uniform $\sigma\sim\uni_{S(D)}$ where we denote by $S(D)\eqdef\set{\sigma\colon D\to D,\sigma\text{ is bijective}}$ the set of all permutations on $D$.
\item Replacements $m_{id}=m_{di}\sim\uni_{[a_{i},b_{i}]}$ for all $i=\sigma(1),\dots,\sigma(d-1)$ with uniform $\sigma\sim\uni_{S(D\setminus\set d)}$ where the bounds $a_{i},b_{i}$ are subject to the constraints $\det(\m M)>0$ and $\min\set{m_{ii}+m_{dd}-1,0}\leq m_{id}\leq \max\set{m_{ii},m_{dd}}$.
\end{itemize}

The replacement step needs some consideration. We denote by $\m N$ the inverse of the $(d-1)\times(d-1)$ upper sub-matrix of $\m M$ and define $\tau_{i}\eqdef m_{di}\sum_{i\in D\setminus\set d}m_{dj}n_{ij}$ such that
$\textstyle\det(\m M)=[1/\det(\m N)] (m_{dd}-\sum_{i\in D\setminus\set d}\tau_i).$ If we replace $m_{di}=m_{id}$ by $x_i$ we have to ensure that $\det[\m M(x_i)]=\det(\m M)+m_{di}(m_{di}n_{ii}+2\tau_{i})-x_i(x_in_{ii}+2\tau_{i})>0$ which means $(x_i+\tau_{i}/n_{ii})\in(-c_{i},c_{i})$ with $c_{i}\eqdef[\tau_{i}^{2}/n^{2}_{ii}+\det(\m M)+m_{di}(m_{di}n_{ii}+2\tau_{i})]^{-1/2}$. Therefore, the lower and upper bounds, $a_{i}\eqdef\max\set{m_{ii}+m_{dd}-1,0,-\tau_{i}/n_{ii}-c_{i}}$ and $b_{i}\eqdef\min\set{m_{ii},m_{dd},-\tau_{i}/n_{ii}+c_{i}}$, respect all constraints on $x_i$. We rapidly update the value of the determinant $\det[\m M(x_i)]$ and proceed with the next entry.

We perform $10\cdot d$ permutation steps and run $500$ sweeps of replacements between permutations. The result is approximately a uniform draw from the set of feasible cross-moments matrices. However, sampling according to these cross-moments might not be possible in higher dimensions because the cross-moment matrix is likely to contain extreme cases which are beyond the scope of the parametric family or not workable for numerical reasons. We introduce a parameter $\varrho\in[0,1]$ which governs the difficulty of the sampling problem by shrinking the upper and lower bounds $a$ and $b$ of the uniform distributions to $a^\varrho\eqdef[(1+\varrho)a+(1-\varrho)b]/2$ and $b^\varrho\eqdef[(1-\varrho)a+(1+\varrho)b]/2$, respectively.

\subsection{Figure of merit}
Let $\m M$ be a cross-moments matrix and let $\m M^{*}$ denote the cross-moment matrix with mean $\v m=\diag(\m M)$ and uncorrelated entries $m^{*}_{ij}=m_{ii}m_{jj}$ for all $i\neq j\in D$. For a parametric family $\q_{\theta}$, we define the figure of merit
\begin{equation}
\label{eq:quantity}
\tau_{\q}(\m M)\eqdef(\norm{\m M-\m M^{*}}-\norm{\m M-\m M^\q})/\norm{\m M-\m M^{*}},
\end{equation}
where $\m M^{\q}$ denotes the sampling cross-moment matrix of the parametric family with parameter $\theta$ adjusted to the desired cross-moment matrix $\m M$. The norm $\norm{\cdot}$ might be any non-trivial matrix norm; in our numerical experiments we use the spectral norm $\norm{\m A}_{2}^{2}\eqdef\lambda_{\max}(\m A\t\m A)$, where $\lambda_{\max}$ delivers the largest eigenvalue, but we found the Frobenius norm $\norm{\m A}_{F}^{2}\eqdef\tr{\m A\m A\t}$ to provide qualitatively the very same picture.


\subsection{Computational results}
For fitting the logistic conditionals family when $d>10$, we replace the exact terms by Monte Carlo estimates \eqref{eq:MC estimates} where we use $n=10^4$ random samples. We estimate the cross-moment matrix of the parametric family $\q$ by $\m M^{\q}\approx n^{-1}\sum_{k=1}^{n}\v x_{k}\v x_{k}\t$ where we use $n=10^6$ samples from $\q$. This concerns only the logistic and linear conditionals families; for the Gaussian copula family, we can explicitly compute the sampling cross-moments as $m^{\q}_{ij}=\Phi_2(\link_i,\link_j;\sigma_{ij})$, where $\g \Sigma$ is the adjusted correlation matrix of the underlying multivariate normal distribution made feasible via \eqref{eq:lower Sigma}.

We loop over $15$ levels of difficulty $\varrho\in[0,1]$ in $3$ dimensions $d=10,25,50$, and generate at each time $200$ cross-moments matrices. We denote by $\tau_1\leq\cdots\leq\tau_{200}$ the ordered figures of merit of the random cross-moment matrices. We report the median and the quantiles $(\tau_{\lfloor(0.5-\omega)n\rfloor},\tau_{\lceil(0.5+\omega)n\rceil})$, depicted as underlying gray areas for $20$ equidistant values of $\omega\in[0.0,0.5]$. Figures 1-3 show the results grouped by parametric families; the $y$-axis with the scale on the left represents the figure of merit $\tau\in[0,1]$, the $x$-axis represents the level of difficulty $\varrho\in[0,1]$, and the $[0.0,0.5]$-gray-scale on the right refers to the level of the quantiles.\\[-2em]

\begin{figure}[H]
\begin{center}
\caption{Logistic conditionals family}
\small $d=10$\hspace{28mm}$d=25$\hspace{28mm}$d=50$\\
\includegraphics[height=25mm, trim=0 0 1cm 0, clip=true]{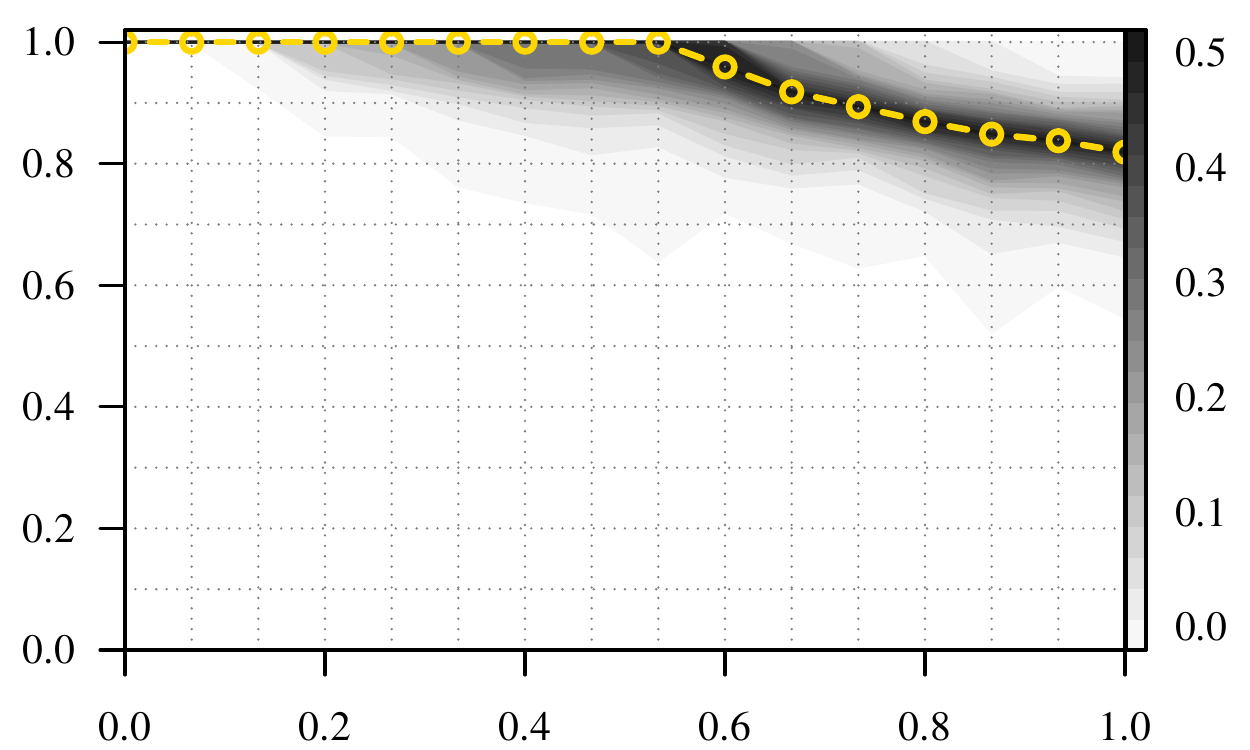}
\includegraphics[height=25mm, trim=1cm 0 1cm 0, clip=true]{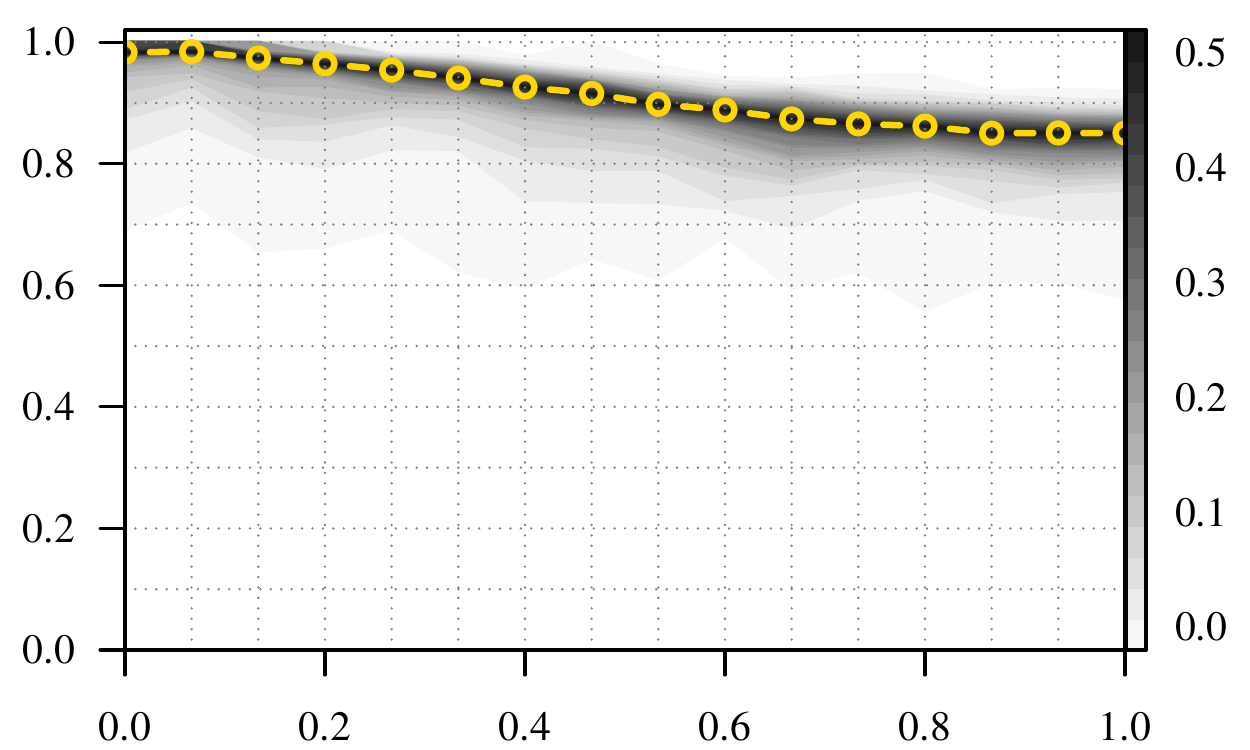}
\includegraphics[height=25mm, trim=1cm 0 0 0, clip=true]{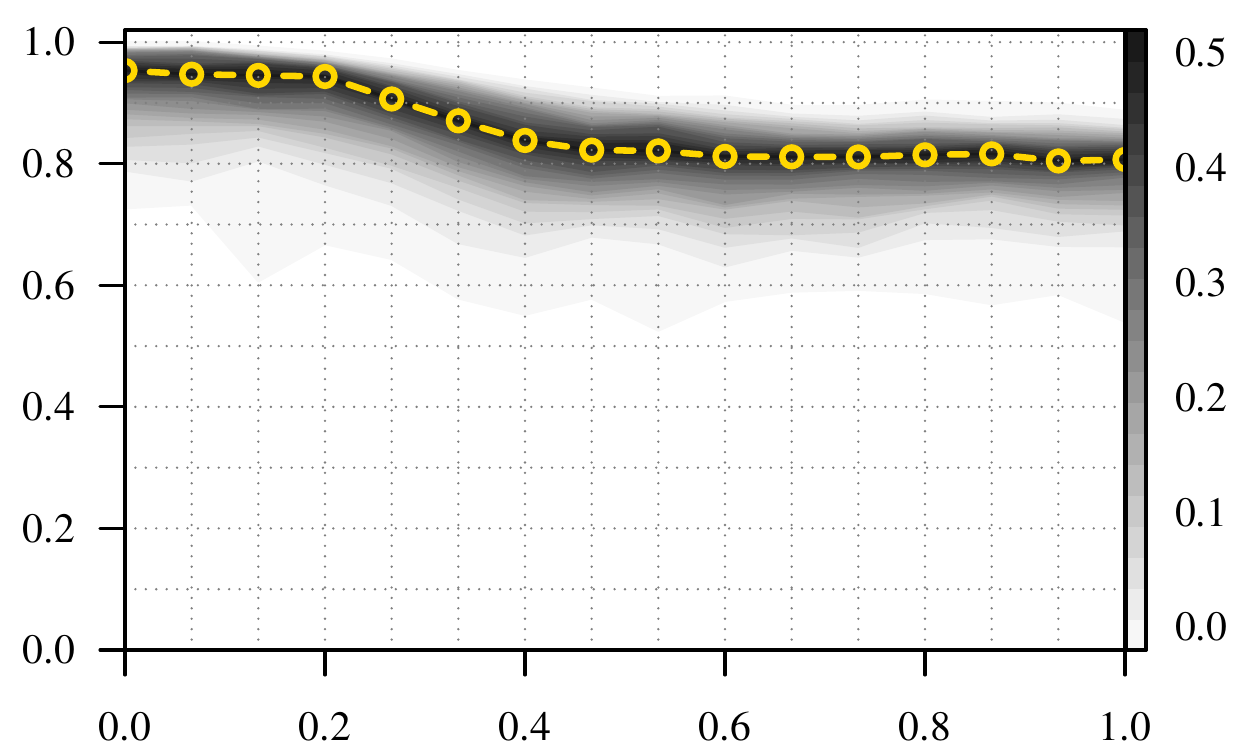}
\end{center}
\vspace{-3em}
\end{figure}
\begin{figure}[H]
\begin{center}
\caption{Gaussian copula family}
\small $d=10$\hspace{28mm}$d=25$\hspace{28mm}$d=50$\\
\includegraphics[height=25mm, trim=0 0 1cm 0, clip=true]{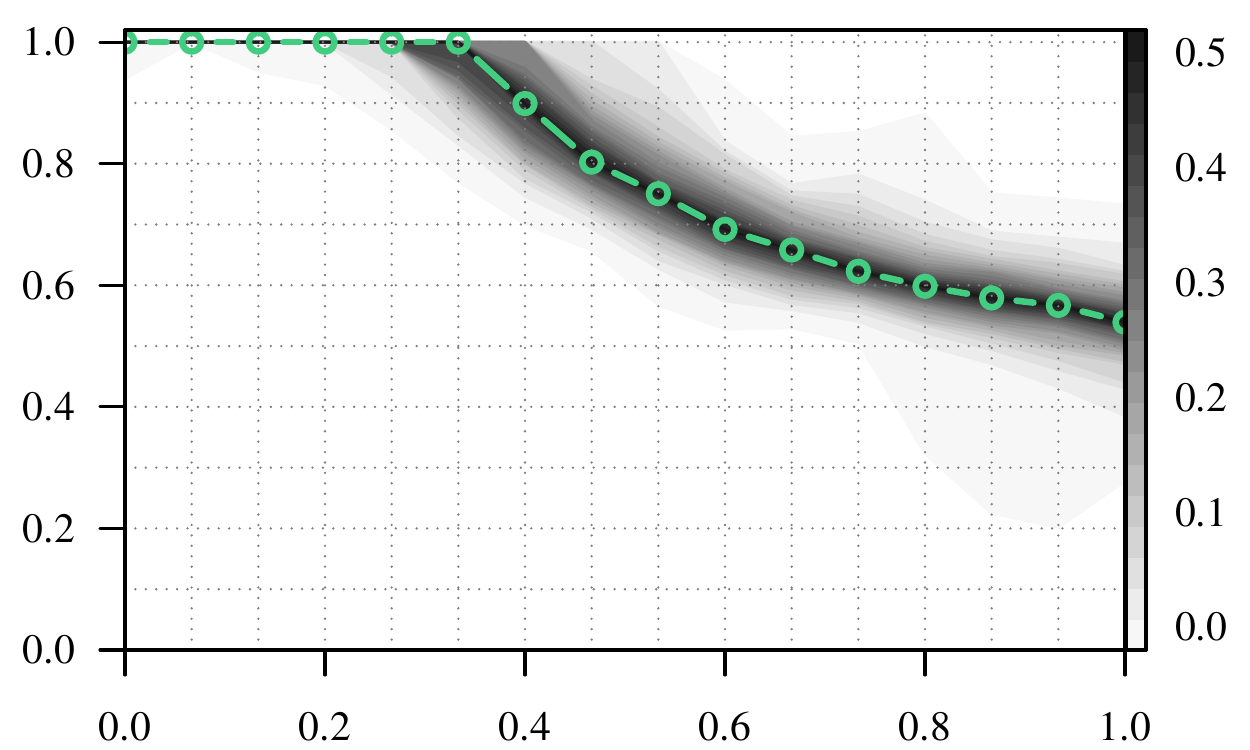}
\includegraphics[height=25mm, trim=1cm 0 1cm 0, clip=true]{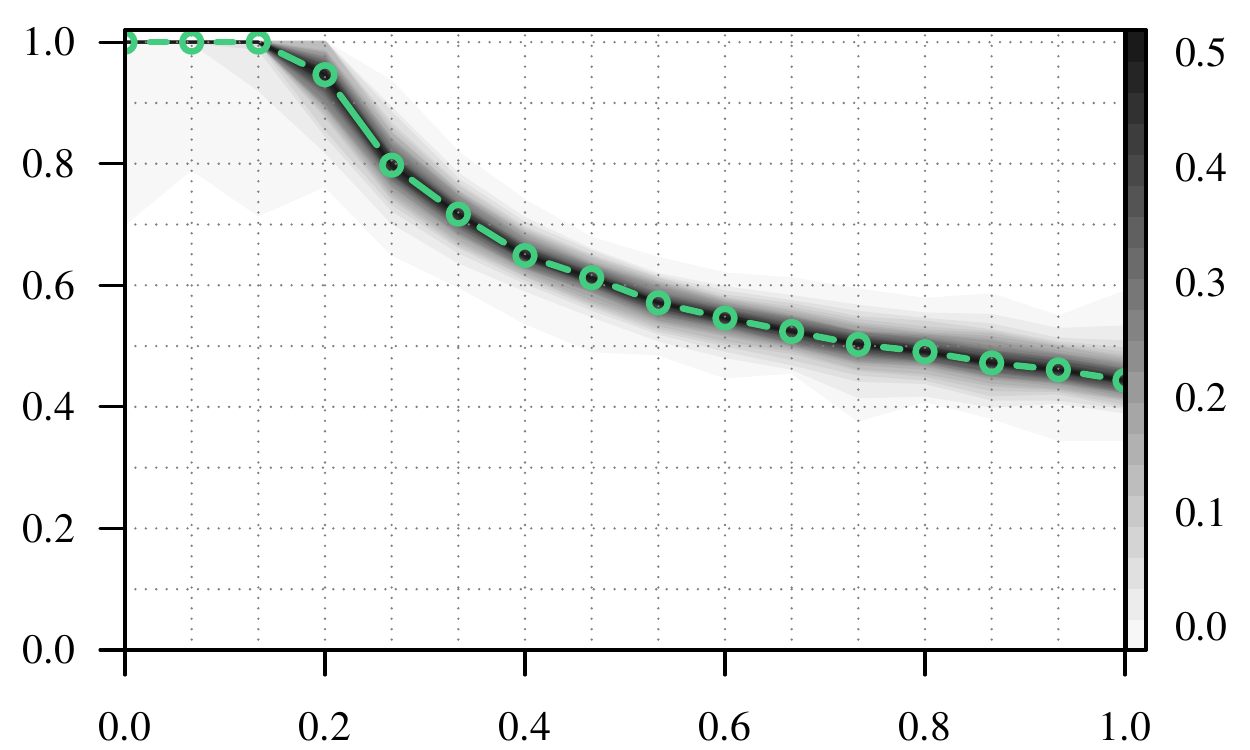}
\includegraphics[height=25mm, trim=1cm 0 0 0, clip=true]{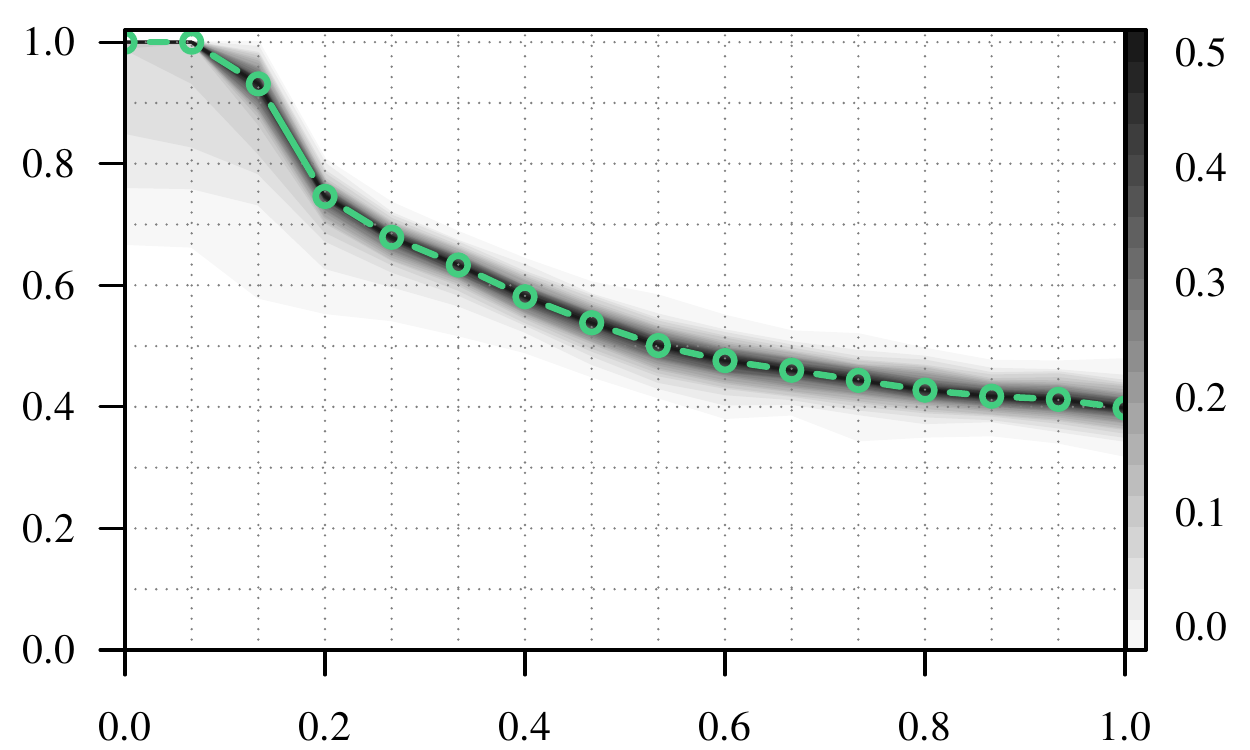}\\
\end{center}
\vspace{-3em}
\end{figure}
\begin{figure}[H]
\begin{center}
\caption{Truncated linear conditionals family}
\small $d=10$\hspace{28mm}$d=25$\hspace{28mm}$d=50$\\
\includegraphics[height=25mm, trim=0 0 1cm 0, clip=true]{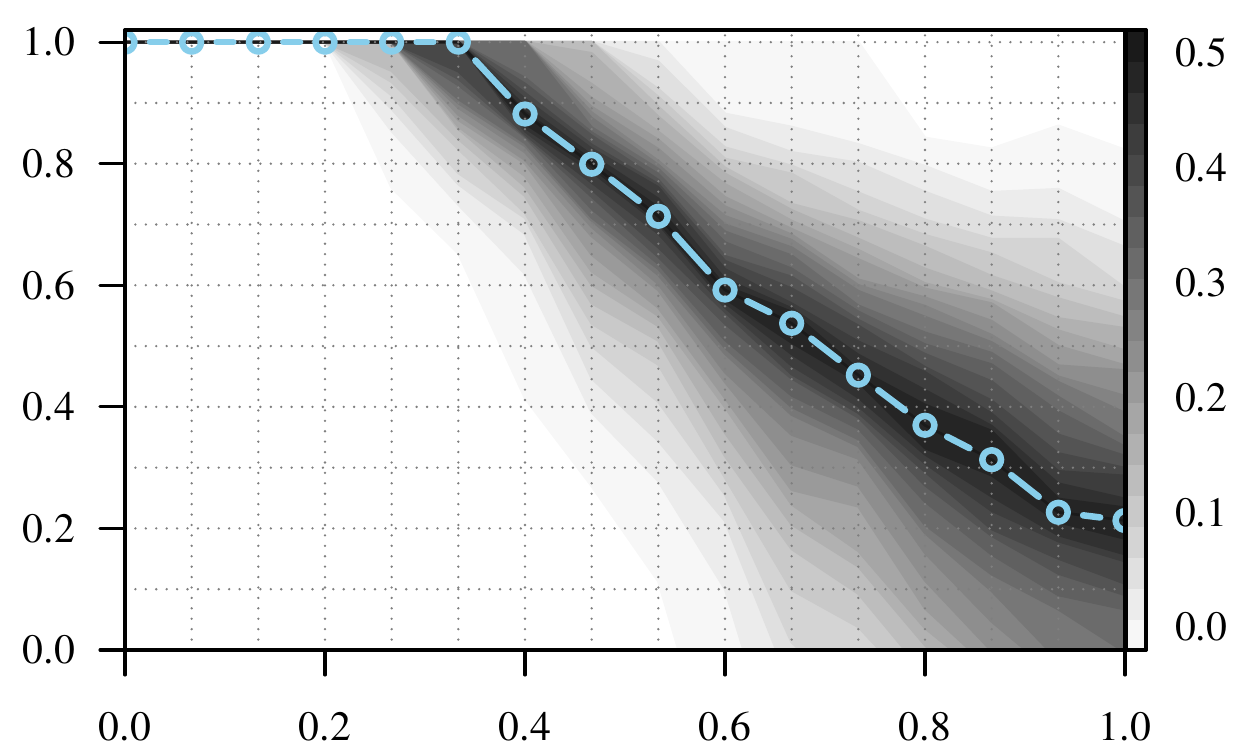}
\includegraphics[height=25mm, trim=1cm 0 1cm 0, clip=true]{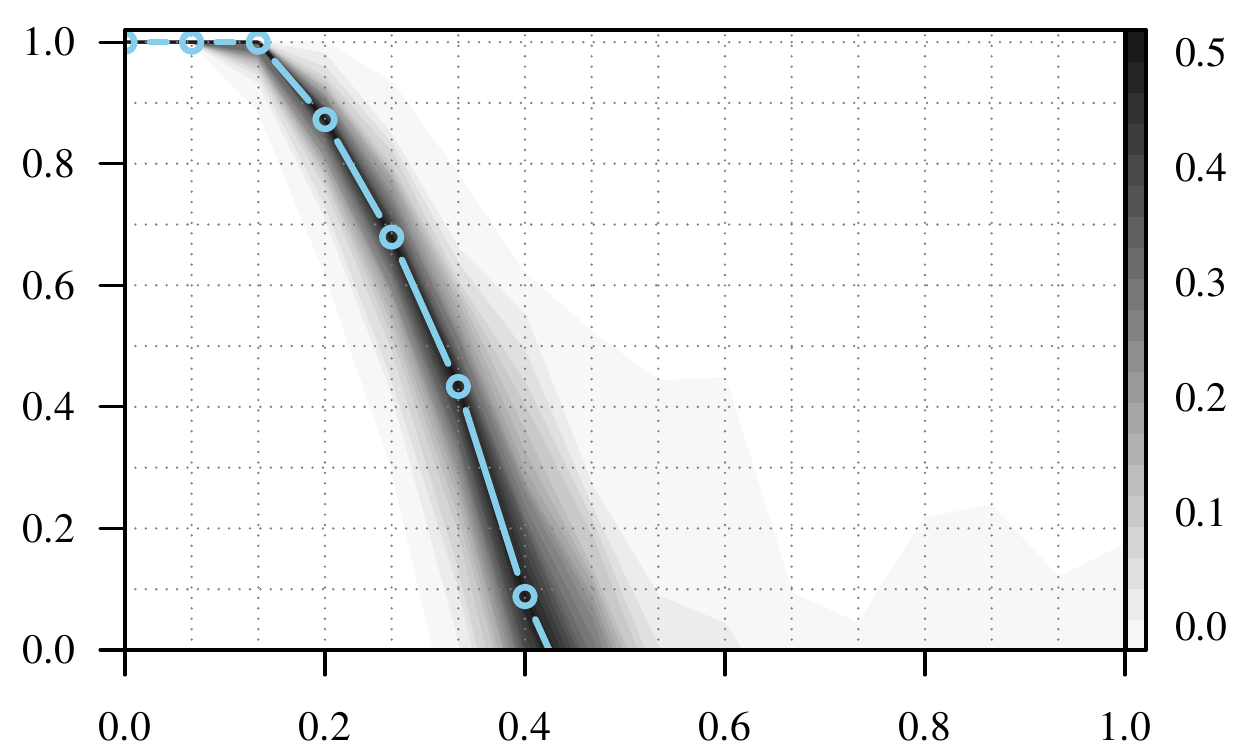}
\includegraphics[height=25mm, trim=1cm 0 0 0, clip=true]{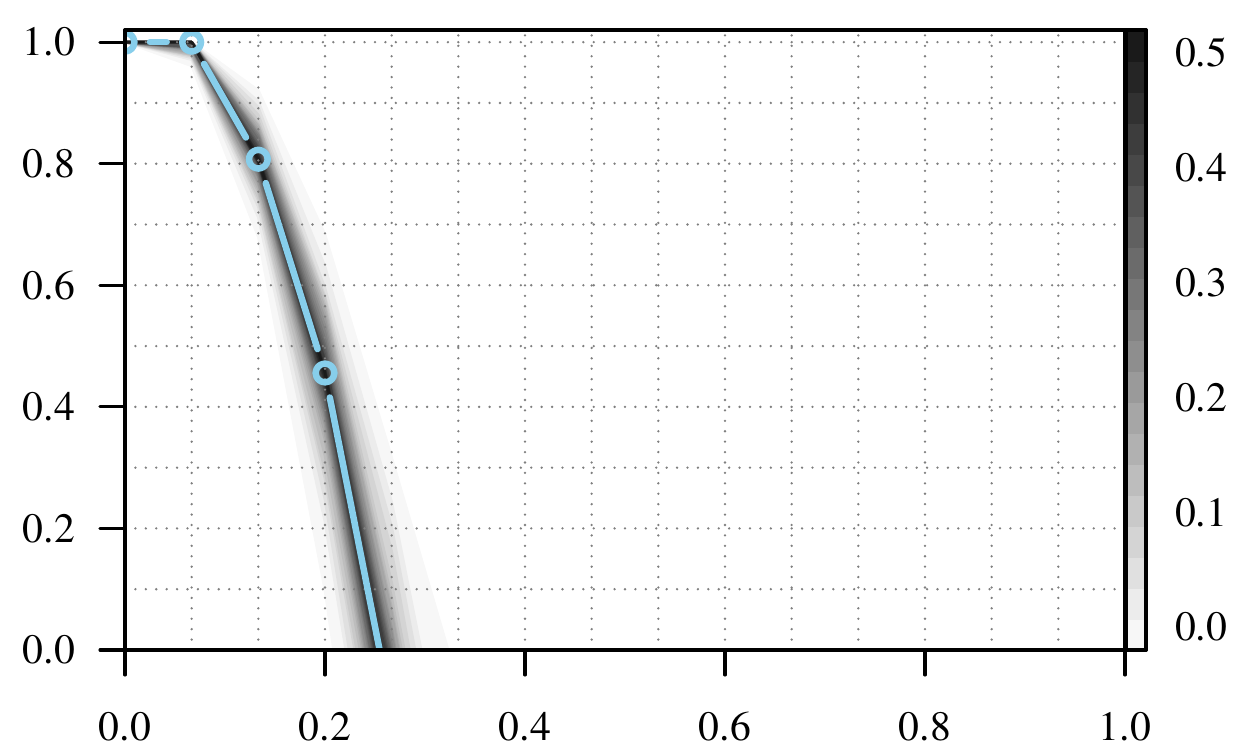}
\end{center}
\end{figure}

\subsection{Discussion}
While Theorem \ref{thm:mean+corr} suggests that the scope of the logistic conditionals family is far beyond competing approaches, we cannot, in practice, expect a binary parametric family with $d(d-1)/2$ dependency parameters to produce just any desired correlation structure. However, the practical scope of the logistic family is limited only by the available numerical accuracy while the scope of competing methods is also limited by their mathematical structure.

The truncated linear conditionals family is fast to compute but its quality deteriorates rapidly with growing complexity. The Gaussian copula family is guaranteed to have the correct mean but it is less flexible than the logistic conditionals family; besides, it does not allow for point-wise evaluation of its mass function. The logistic conditionals family is computationally demanding but by far the most versatile option. These findings confirm similar comparisons carried out against the backdrop of particular applications \citep{farrell2008methods,schaefer2012particle}.

\section{Acknowledgements}
This work is part of the author's Ph.D. thesis at CREST under supervision of Nicolas Chopin whom I would like to thank for numerous discussions on this topic. I thank Ioannis Kosmidis for his comments on a prior version of this paper.



\end{document}